\newcommand{\eps}{\varepsilon}
\newtheorem{theorem}{Theorem}[section]
\newtheorem{lemma}[theorem]{Lemma}
\newtheorem{definition}{Definition}[section]
\newtheorem{claim}[theorem]{Claim}
\newtheorem{example}{Example}[section]
\newtheorem{assumption}{Assumption}[section]
\DeclareMathOperator*{\argmax}{arg\,max}
\DeclareRobustCommand\iff{\;\Longleftrightarrow\;}
\DeclareMathOperator{\OPT}{OPT}
\newcommand{\E}{\mathbb E}
\newcommand{\smid}{\,|\,}
\newcommand{\dTV}{d_{\mathrm{TV}}}
\newcommand{\prior}{\mu}
\newcommand{\su}{u}
\newcommand{\esu}{U}
\newcommand{\ru}{v}
\newcommand{\BP}{\mathrm{BP}}
\newcommand{\OBJ}{\mathrm{OBJ}}
\newcommand{\adv}{\mathrm{adv}}
\newcommand{\ob}{\mathrm{ob}}
\newcommand{\kibitz}[2]{\ifnum\Comments=1\textcolor{#1}{#2}\fi \ignorespaces}
\newcommand{\yc}[1]{\kibitz{cyan} {[YC: #1]}}
\newcommand{\tao} [1]  {\kibitz{blue}{\bf\noindent [Tao: #1]} }
\title{Persuading a Behavioral Agent: \\
Approximately Best Responding and Learning}
\author{
Yiling Chen\\
{\small Harvard University} \\
{\small \tt{yiling@seas.harvard.edu}}
\and
Tao Lin\\
{\small Harvard University} \\
{\small \tt{tlin@g.harvard.edu}}
}
\date{\small
First version: Feb, 2023. \\
{\color{red} The main results of this draft have been subsumed by our later paper \cite{lin2024persuading}: \href{https://arxiv.org/pdf/2402.09721.pdf}{link}} }
\begin{document}

\maketitle

\begin{abstract}
The classic Bayesian persuasion model assumes a Bayesian and best-responding receiver.
We study a relaxation of the Bayesian persuasion model where the receiver can approximately best respond to the sender's signaling scheme.
We show that, under natural assumptions, (1) the sender can find a signaling scheme that guarantees itself an expected utility almost as good as its optimal utility in the classic model, no matter what approximately best-responding strategy the receiver uses; (2) on the other hand, there is no signaling scheme that gives the sender much more utility than its optimal utility in the classic model, even if the receiver uses the approximately best-responding strategy that is best for the sender.
Together, (1) and (2) imply that the approximately best-responding behavior of the receiver does not affect the sender's maximal achievable utility a lot in the Bayesian persuasion problem. 
The proofs of both results rely on the idea of robustification of a Bayesian persuasion scheme: given a pair of the sender's signaling scheme and the receiver's strategy, we can construct another signaling scheme such that the receiver prefers to use that strategy in the new scheme more than in the original scheme, and the two schemes give the sender similar utilities.
As an application of our main result (1), we show that, in a repeated Bayesian persuasion model where the receiver learns to respond to the sender by some algorithms, the sender can do almost as well as in the classic model.  Interestingly, unlike (2), with a learning receiver the sender can sometimes do much better than in the classic model. 
\end{abstract}


\section{Introduction}
Bayesian persuasion, proposed by \citet{kamenica_bayesian_2011} and extensively studied in the literature in recent years (see, e.g., surveys by \citet{dughmi_algorithmic_2017}, \citet{kamenica2019bayesian}, and \citet{bergemann_information_2019}),  is an information design problem that studies how a person with an informational advantage (sender) should reveal information to influence the decision-making of another person (receiver).  In the classic model, the sender knows a state of the world that is unknown to the receiver, sends a signal to the receiver, and the receiver responds by taking some action that affects the utilities of both parties.   
An important assumption in the classic model is that the receiver always best responds: namely, given the signal, the receiver takes an action that is optimal for itself with respect to the posterior distribution of the state of the world. 
This assumption requires, for example, the receiver to have an accurate prior belief about the state of the world, to know the sender's signaling scheme, to follow Bayes' rule, and to know its own utility function exactly. Not all of these requirements are likely to be satisfied in practice.  For example, evidence from behavioral economics shows that human decision-makers often fail to be perfectly Bayesian \citep{camerer1998bounded, benjamin2019errors}.  

In this work, we consider a Bayesian persuasion model where the receiver is allowed to \emph{approximately} best respond.
Specifically, upon receiving a signal from the sender, the receiver can take an action that is approximately optimal for itself with respect to the posterior distribution of the state of the world.   If there are multiple approximately optimal actions, the receiver can take an arbitrary one. 
Under this model, we characterize the possible range of expected utility the sender can achieve by some signaling scheme when the receiver approximately best responds in different ways.  In particular, we consider two ways: the worst and the best (for the sender). 

\paragraph{\bf The worst way of approximately best responding}
The first question we ask is: if the receiver approximately best responds in the \emph{worst} way for the sender, what is the maximal utility the sender can achieve by some signaling scheme?
This is a ``maximin'' optimization problem and belongs to the literature of ``robust Bayesian persuasion'' (e.g., \cite{dworczak2020preparing, de_clippel_non-bayesian_2022}).
\yc{We haven't talked about what is a robust Bayesian persuasion problem.}
\tao{Shall we just delete ``robust Bayesian persuasion'' or reference to the related works?}\yc{Maybe reference it and point it to the related work}
The following example shows that, in general, the sender's optimal utility against a worst-case approximately best-responding receiver can be really bad compared to the sender's optimal utility in the classic Bayesian persuasion model:

\begin{example}\label{ex:no-robust-scheme}
There are two possible states of the world, $\omega_1, \omega_2$.  The receiver can take two actions $a_1, a_2$.  The utilities of the sender and the receiver when the receiver takes an action at a state are shown below: 
\begin{center}
\begin{tabular}{|l|l|l|lll|l|l|l|}
\cline{1-3} \cline{7-9}
sender & $\omega_1$  & $\omega_2$  &  & \hspace{1em} &  & receiver & $\omega_1$ & $\omega_2$ \\ \cline{1-3} \cline{7-9} 
$a_1$ & $1$         & $0$         &  &  &  & $a_1$ & $0$        & $0$        \\ \cline{1-3} \cline{7-9} 
$a_2$ & $0$ & $0$ &  &  &  & $a_2$ & $0$        & $1$        \\ \cline{1-3} \cline{7-9}
\end{tabular}
\end{center}
In words, the sender obtains positive utility only when the receiver takes action $a_1$ at state $\omega_1$ and the receiver obtains positive utility only when taking action $a_2$ at state $\omega_2$. 
The sender and the receiver share a common prior belief of $\,\Pr[\omega_1] = \Pr[\omega_2] = 1/2$. 

In the classic Bayesian persuasion model \citep{kamenica_bayesian_2011}, an optimal signaling scheme for the sender is to fully reveal the state.  The receiver responds by taking action $a_1$ at state $\omega_1$ and taking action $a_2$ at state $\omega_2$.\footnote{At state $\omega_1$ the receiver is indifferent between the two actions, and the classic model assumes that the receiver breaks ties in favor of the sender.}
The sender's expected utility is $1/2$. 

In our model where the receiver approximately best responds in the worst way for the sender, no matter what signaling scheme is used and what signal is sent, $a_2$ is always a weakly better action than $a_1$ for the receiver, so the receiver can always take $a_2$. The sender's utility becomes $0$, significantly less than the $1/2$ in the classic model. 
\end{example}

Our first main result finds a natural condition under which the phenomenon in Example \ref{ex:no-robust-scheme} will not happen and \emph{there exists a signaling scheme that guarantees the sender a utility that is almost as good as the sender's optimal utility in the classic Bayesian persuasion model, even if the receiver approximately best responds in the worst way}.  Roughly speaking, this condition is: for the receiver, for any possible action $a$, there exists a state of the world $\omega$ for which $a$ is the unique optimal action.  This condition is not satisfied by Example \ref{ex:no-robust-scheme} and is satisfied by the classic ``judge example'' in \cite{kamenica_bayesian_2011}.  We use the judge example to illustrate the main idea:

\begin{example}
There are a prosecutor (sender), a judge (receiver), and a defendant.  There are two possible states of the world: the defendant is guilty or innocent.  The judge has to decide to either convict or acquit the defendant.  As shown by the utility matrices below, the prosecutor always wants the judge to convict while the judge wants to convict if and only if the defendant is guilty. 
\begin{center}
\begin{tabular}{|l|l|l|lll|l|l|l|}
\cline{1-3} \cline{7-9}
prosecutor  & guilty  & innocent  &  & \hspace{1em} &  & judge & guilty & innocent \\ \cline{1-3} \cline{7-9} 
convict & $1$         & $1$         &  &  &  & convict & $1$        & $0$        \\ \cline{1-3} \cline{7-9} 
acquit & $0$ & $0$ &  &  &  & acquit & $0$        & $1$        \\ \cline{1-3} \cline{7-9}
\end{tabular}
\end{center}
The judge's utility satisfies our condition: convict is uniquely optimal for guilty and acquit is uniquely optimal for innocent. 

The prosecutor and the judge share a prior belief of $\mu = \Pr[\text{guilty}] = 0.3$.

In the classic model, the optimal signaling scheme for the prosecutor is the following: if the defendant is guilty, say ``$g$'' with probability $1$; if the defendant is innocent, say ``$i$'' with probability $\frac{4}{7}$ and ``$g$'' with probability $\frac{3}{7}$.  Under this signaling scheme, when hearing ``$i$'' the judge knows for sure that the defendant is innocent and will acquit.  When hearing ``$g$'' the judge only knows that the defendant is guilty with probability $\frac{0.3 \times 1}{0.3 \times 1 + 0.7 \times \frac{3}{7}} = 0.5$, so the judge is indifferent between conviction and acquitting and is assumed to break ties in favor of the prosecutor (convicts).
So, the expected utility of the prosecutor under this signaling scheme is $\Pr[i] \times 0 + \Pr[g] \times 1 = 0.3 \times 1 + 0.7 \times \frac{3}{7} = 0.6$.

The above signaling scheme is not a good scheme for our model where the receiver can approximately best respond in the worst way for the sender, because the judge can choose to acquit when hearing ``$g$'', giving the prosecutor a utility of $0$.

A simple way to find a good signaling scheme for our model is to perturb the optimal scheme in the classic model.  In the above example, by slightly decreasing the probability with which the prosecutor says ``$g$'' for an innocent defendant (the $\frac{3}{7}$ probability), the judge will have a $0.5+\eps$ belief for the defendant being guilty when hearing ``$g$''.  If $\eps$ is large enough such that there is a substantial gap between the utilities of conviction and acquitting for the judge, then the judge will continue to choose to convict even if the judge best responds only approximately.  This perturbed scheme gives the prosecutor an expected utility of $0.6 - O(\eps)$.
\end{example}

The idea of finding a good scheme for our approximately-best-responding model by perturbing the optimal scheme in the classic model is general.  It can be applied to all Bayesian persuasion instances that satisfy the condition we identified above, hence proving the result that a scheme that is almost as good as the optimal scheme in the classic model exists.   We call this perturbation idea \emph{robustification of a Bayesian persuasion scheme}, meaning that the optimal yet sensitive scheme in the classic model can be turned into a robust scheme against the receiver's worst-case behavior, without changing the sender's expected utility a lot.

\paragraph{\bf The best way of approximately best responding}
The second question we ask is: if the receiver approximately best responds in the \emph{best} way for the sender, what is the maximal utility the sender can achieve by some signaling scheme?
This is a ``maximax'' objective that quantifies the maximal extent to which the sender can exploit the receiver's irrational behavior.
One may initially think that the sender now can do much better than the classic Bayesian persuasion model.  However, we show that, again under the condition we identified above, \emph{the sender cannot do much better than the classic model}.  Namely, there is no signaling scheme that gives the sender a much higher utility than the sender's optimal utility in the classic model, even if the receiver approximately best responds in favor of the sender.
Together with the first main result, we conclude that, in the Bayesian persuasion problem, the sender's maximal achievable utility is not affected a lot by the receiver's approximately best-responding behavior. 
Interestingly, our second main result is also proved by the perturbation/robustification idea used in the proof of the first result.  

\paragraph{\bf Application: persuading a learning receiver}
Finally, we apply our techniques developed above (in particular, the robustification idea) to a model of repeated Bayesian persuasion where the receiver uses some algorithms to learn to respond to the sender.  The learning problem of the receiver is a contextual multi-armed bandit problem where algorithms like Multiplicative Weight Update and EXP-3 can be applied.  At a high level, the learning behavior of the receiver is also an approximately-best-responding behavior, hence our techniques for the approximately-best-responding model can be applied here to prove a result similar to our first main result above: \emph{with a learning receiver, the sender can always do almost as well as in the classic non-learning model}.
Interestingly, our second main result above no longer holds in the learning receiver model: we find an example with a learning receiver where the sender can do \emph{much better} than in the classic non-learning model.
Finding conditions under which the sender cannot do much better than in the classic model is an interesting future direction.

\subsection{Related Works}\label{sec:related}
Since the seminal work of \citet{kamenica_bayesian_2011} (who generalized the model of \citet{brocas_influence_2007}), the Bayesian persuasion problem has inspired a long and active line of research in the literature.  
Extensions like Bayesian persuasion with multiple senders \citep{gentzkow_bayesian_2017, ravindran_competing_2022} and multiple receivers \citep{wang_bayesian_2013, taneva_information_2019} are studied.
Applications to auctions \citep{emek_signaling_2014} and voting \citep{alonso_persuading_2016} are considered.  
The computational complexity of Bayesian persuasion is determined \citep{dughmi_algorithmic_2016}.
Interested readers can refer to the surveys by \citet{dughmi_algorithmic_2017}, \citet{kamenica2019bayesian}, and \citet{bergemann_information_2019} for excellent overviews of the literature.

Closest to our work is the line of works on \emph{Robust Bayesian persuasion}, which study the Bayesian persuasion problem with various assumptions in the classic model being relaxed.
For the sender, the sender does not know the receiver's prior belief \citep{kosterina_persuasion_2022, zu_learning_2021}, does not know the utility function of the receiver \citep{babichenko_regret-minimizing_2021}.
For the receiver, the receiver may receive additional signals besides the signal sent by the sender \citep{dworczak2020preparing, ziegler2020adversarial}, make mistakes in Bayesian update \citep{de_clippel_non-bayesian_2022}, 
be risk-conscious \citep{anunrojwong_persuading_2023}, 
do quantal response \citep{feng2024rationality} or approximate best response \citep{yang2024computational}.
Many of these works take the maximin approach of maximizing the sender's worst-case utility when the classic assumptions are violated, aiming to find the exact solution to the maximin problem.  
We also take a maximin approach, but unlike previous works, instead of finding the exact maximin solution, we are interested in bounding the maximin solution and finding conditions under which the maximin solution is close to the original solution.  This allows us to study a very general model that captures several previous models as special cases (e.g., \cite{de_clippel_non-bayesian_2022} and \cite{feng2024rationality}).  We also study a maximax problem, which is typically not studied in the robust Bayesian persuasion literature.  

Our work also relates to a recent trend of works on Bayesian persuasion with learning: \cite{castiglioni_online_2020, castiglioni2021multi, zu_learning_2021, feng_online_2022, wu_sequential_2022}.
The learner in these works is the sender, who repeatedly interacts with the receiver to learn missing information and adjust the signaling scheme over time.  In our work, however, the learner is the receiver, who learns to respond to the sender.
In this respect, our work has a similar spirit with \cite{braverman_selling_2018, camara_mechanisms_2020, cai2023selling, rubinstein2024strategizing, deng_strategizing_2019, mansour_strategizing_2022, guruganesh2024contracting} which study learning agents in auction design,  Stackelberg games, and contract design.  They find that the principal (seller, leader) can exploit the learning behavior of the agent (buyer, follower) to obtain a higher utility than in the non-learning model.  We find a similar phenomenon in the Bayesian persuasion setting.

Our work situates in a broad agenda of design for behavioral agents. As the Bayesian rational agent model has been shown to be violated in experiments and in practice \citep{10.1257/aer.103.3.617}, many behavioral models of agents have been proposed (e.g., \cite{Stahl1995OnPM, 10.1162/0033553041502225}). While they each explain away some of the observed rationality violations for specific settings, they are similarly observed to not hold perfectly in the real world. Thus, one challenge of mechanism and information design for behavioral agents is that there is no perfect model. Our approximately-best-responding agent model includes more than one behavioral model (e.g., the quantal response model \citep{mckelvey_quantal_1995}) and, with it, our approach of bounding the sender's expected utility in persuasion provides a more robust guarantee than finding the optimal persuasion strategy with respect to a specific behavioral model.

\tao{Hi Yiling.  I might need your help on this part.  What are some references for this part?}\yc{I added the previous paragraph, but maybe it fits better in the intro before the related works.}

\section{Model: Approximately Best Responding Agent}
\label{sec:model}
Let $\Omega$ be a finite set of \emph{states of the world} (or simply \emph{state}), with $|\Omega| = m$. 
A sender (persuader) and a receiver (decision maker) have a common prior belief $\prior \in \Delta(\Omega)$ about the state of the world $\omega \in \Omega$. We use $\Delta(X)$ to denote the set of all probability distributions over set $X$. Let $S$ be a finite set of \emph{signals}.
The sender first commits to a \emph{signaling scheme} $\pi: \Omega \to \Delta(S)$, which maps each possible state $\omega\in \Omega$ to a distribution over signals.
Then, a state $\omega$ is realized according to $\prior$. 
The sender observes $\omega$ and sends a signal $s \sim \pi(\omega)$ to the receiver.
We use $\pi(s|\omega)$ to denote the conditional probability that the sender sends signal $s$ at state $\omega$, use $\pi(\omega, s) = \prior(\omega) \pi(s|\omega)$ to denote the joint probability that the state is $\omega$ and the signal is $s$, and use $\pi(s) = \sum_{\omega\in\Omega} \pi(\omega, s)$ to denote the marginal probability of $s$.
The receiver has a finite set of \emph{actions} $A$.  When the receiver takes action $a\in A$ and the state is $\omega\in\Omega$, the receiver obtains utility $\ru(a, \omega)$ and the sender obtains utility $\su(a, \omega)$. 
We normalize the utility so that $0\le \ru(a, \omega), \su(a, \omega) \le 1$ for all $a\in A$, $\omega \in \Omega$. 

\paragraph{\bf Receiver's behavior}
The receiver knows the sender's signaling scheme $\pi$ but not the realization of $\omega$. Upon receiving some signal $s\in S$, the receiver forms a posterior belief $\mu_s \in \Delta(\Omega)$ about $\omega$ according to Bayes' rule $\mu_s(\omega) = \pi(\omega|s) = \frac{\prior(\omega)\pi(s|\omega)}{\pi(s)}$.
Overloading notations, we let $\ru(a, \mu_s)$ denote the receiver's expected utility when taking action $a$ with belief $\mu_s$: $\ru(a, \mu_s) = \E_{\omega \sim \mu_s}[ \ru(a, \omega) ] = \sum_{\omega \in \Omega} \pi(\omega | s) \ru(a, \omega)$. 

Classic works on Bayesian persuasion usually assume a \emph{best-responding} receiver: after receiving a signal $s$ (and forming the posterior $\mu_s$), the receiver picks an action $a^*_\pi(s) \in \argmax_{a\in A} \ru(a, \mu_s)$ to maximize its expected utility.
In this work, we allow the receiver to \emph{approximately} best respond.
Let $\gamma \ge 0$ be a parameter.  We say an action $a\in A$ is a \emph{$\gamma$-best-responding action} to signal $s$ (or posterior $\mu_s$) if the expected utility of action $a$ given signal $s$ is at most $\gamma$-worse than a best-responding action: 
\begin{equation}
    \ru(a, \mu_s) \ge \ru(a^*_\pi(s), \mu_s) - \gamma.
\end{equation}
Let $A_\pi^\gamma(s)$ be the set of all $\gamma$-best-responding actions to $s$ (under signaling scheme $\pi$): 
\begin{equation} 
    A_\pi^\gamma(s) \coloneqq \Big\{ a\in A ~\big|~ \ru(a, \mu_s) \ge \ru(a^*_\pi(s), \mu_s) - \gamma \Big\}. 
\end{equation}
Clearly, $A_\pi^0(s)$ is the set of best-responding actions and $a^*_\pi(s) \in A_\pi^0(s) \subseteq A_\pi^\gamma(s)$.
A (randomized) \emph{receiver strategy} is denoted by $\rho: S \to \Delta(A)$, a mapping from every signal $s$ to a distribution over actions $\rho(s)$.  We overload the notation $\rho(s)$ to also denote the randomized action $a$ distributed according to the distribution $\rho(s)$.  An approximately best-responding receiver strategy is a strategy where the receiver takes approximately best-responding actions with high probability, formalized as follows:   
\begin{definition}
Let $0\le \gamma, \delta < 1$ be two parameters. 
We say a receiver strategy $\rho$ is a 
\begin{itemize}
\item \emph{best-responding strategy}, if $~\Pr[\rho(s) \in A_\pi^0(s)] = 1$ for every $s\in S$ with $\pi(s) > 0$;
\item \emph{$\gamma$-best-responding strategy}, if $~\Pr[\rho(s) \in A_\pi^\gamma(s)] = 1$ for every $s\in S$ with $\pi(s) > 0$; \yc{Say $P (\rho(s) \in A_\pi^0(s)) = 1$ instead? Similar for the two items below. } \tao{Ok!}
\item \emph{$(\gamma, \delta)$-best-responding strategy}, if $~\Pr[\rho(s)\in A_\pi^\gamma(s)] \ge 1-\delta$ for every $s\in S$ with $\pi(s) > 0$.  
\end{itemize}
\end{definition}
The set of best-responding strategies is a subset of $\gamma$-best-responding strategies, which is in turn a subset of $(\gamma, \delta)$-best-responding strategies.  A best-responding strategy is $0$-best-responding.  A $\gamma$-best-responding strategy is $(\gamma, 0)$-best-responding.  

Our definition of $(\gamma, \delta)$-best-responding strategies includes, for example, two other models in the literature that also relax the receiver's rationality in the Bayesian persuasion problem: 
\yc{Careful with the term bounded rationality. I'm not sure that these models are bounded rationality models.}
\tao{rephrased}
the quantal response model (proposed by \citet{mckelvey_quantal_1995} in normal-form games, recently studied by \citet{feng2024rationality} in Bayesian persuasion) and a model where the receiver makes mistakes in Bayesian update \citep{de_clippel_non-bayesian_2022}. 
\begin{example} \label{example:approximately-best-responding}
The following receiver strategies are $(\gamma, \delta)$-best-responding with some $\gamma, \delta$: 
\begin{itemize} 
	\item Quantal response: given signal $s\in S$, the receiver chooses action $a\in A$ with probability $\frac{\exp(\lambda \ru(a, \mu_s))}{\sum_{a'\in A} \exp(\lambda \ru(a', \mu_s))}$, where $\lambda \ge 0$ is a parameter.  This strategy is $(\frac{\log(|A| \lambda)}{\lambda}, \frac{1}{\lambda})$-best-responding. 
	\item Inaccurate posterior: given signal $s\in S$, the receiver forms some posterior $\mu_s'$ that is different yet close to $\mu_s$ in total variation distance $\dTV(\mu_s', \mu_s) \le \eps$.  The receiver picks a best-responding action with respect to $\mu_s'$.  This strategy is $(2\eps, 0)$-best-responding. 
\end{itemize} 
\end{example}
\noindent See Appendix~\ref{app:example:approximately-best-responding} for a proof of this example. 

\paragraph{\bf Sender's objective}
We denote by $\esu_\prior(\pi, \rho)$ the expected utility of the sender when the sender uses signaling scheme $\pi$ and the receiver uses strategy $\rho$, with prior $\prior$: 
\begin{equation*}
    \esu_\prior(\pi, \rho) = \E[\su(a, \omega)] =  \sum_{\omega \in \Omega} \prior(\omega) \sum_{s \in S} \pi(s | \omega) \E_{a \sim \rho(s)}[\su(a, \omega)] = \sum_{(\omega, s)\in \Omega \times S} \pi(\omega, s) \E_{a \sim \rho(s)}[\su(a, \omega)].
\end{equation*}
Alternatively, we can write 
\begin{equation*}
    \esu_\prior(\pi, \rho) = \sum_{s\in S} \pi(s) \E_{a \sim \rho(s)}\Big[\sum_{\omega\in \Omega} \pi(\omega | s) \su(a, \omega)\Big] = \sum_{s\in S} \pi(s) \E_{a \sim \rho(s)}[\su(a, \mu_s)]. 
\end{equation*}
Our work has two goals.
\yc{Consider rephrasing this sentence. The second is not a goal of the sender.}
\tao{Changed it to ``our'' goal.}
The first goal is to find a signaling scheme to maximize the sender's expected utility even if the receiver is allowed to take the \emph{worst} approxiamtely-best-responding strategy for the sender:
\begin{equation}
    \underline{\OBJ}(\prior, \gamma, \delta) = \sup_{\pi} \inf_{\rho: (\gamma, \delta)\text{-best-responding}} \esu_\prior(\pi, \rho),
\end{equation}
This is a ``maximin'' objective and can be regarded as a ``robust Bayesian persuasion'' problem.
The second goal is to find an upper bound on the sender's expected utility if the receiver takes the \emph{best} approximately-best-responding strategy for the sender:  
\begin{equation}
    \overline{\OBJ}(\prior, \gamma, \delta) = \sup_{\pi} \sup_{\rho: (\gamma, \delta)\text{-best-responding}} \esu_\prior(\pi, \rho). 
\end{equation}
This is a ``maximax'' objective that quantifies the maximal extent to which the sender can exploit the receiver's irrational behavior.  Clearly, $\underline{\OBJ}(\prior, \gamma, \delta) \le \overline{\OBJ}(\prior, \gamma, \delta)$.

\paragraph{\bf Restrict attentions to $\gamma$-best-responding strategies}
We will regard $\gamma, \delta$ as parameters that are close to $0$.  The following lemma shows that a small $\delta$ (the probability with which the receiver does not take a $\gamma$-best-responding action) does not affect the sender's objective a lot.  As a result, we can restrict attentions to $\gamma$-best-responding strategies for technical convenience.  
\begin{lemma}\label{lem:restrict_to_gamma-best}
	Fix a signaling scheme $\pi$.  For any $(\gamma, \delta)$-best-responding receiver strategy $\rho$, there exists a $\gamma$-best-responding receiver strategy $\tilde \rho$ such that $| \esu_\prior(\pi, \rho) - \esu_\prior(\pi, \tilde \rho) | \le \delta$. 
\end{lemma}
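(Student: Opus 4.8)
The plan is to construct $\tilde\rho$ from $\rho$ by a single ``rounding'' step that reassigns the small amount of misplaced probability mass onto a legitimate $\gamma$-best-responding action, and then to argue that this reassignment changes the sender's expected utility by at most $\delta$.

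Concretely, fix any signal $s$ with $\pi(s) > 0$. Under $\rho$, the definition of a $(\gamma,\delta)$-best-responding strategy gives $\Pr[\rho(s) \notin A_\pi^\gamma(s)] \le \delta$. I would let $\tilde\rho(s)$ agree with $\rho(s)$ on every action in $A_\pi^\gamma(s)$, and reassign the entire remaining mass, of total weight $\Pr[\rho(s) \notin A_\pi^\gamma(s)] \le \delta$, to the exact best response $a^*_\pi(s)$. For signals $s$ with $\pi(s) = 0$ I would set $\tilde\rho(s)$ arbitrarily (say to $a^*_\pi(s)$), since such signals are irrelevant to both the objective and the best-responding requirement. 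Because $a^*_\pi(s) \in A_\pi^0(s) \subseteq A_\pi^\gamma(s)$, the distribution $\tilde\rho(s)$ is supported entirely on $A_\pi^\gamma(s)$, so $\Pr[\tilde\rho(s) \in A_\pi^\gamma(s)] = 1$ for every $s$ with $\pi(s)>0$, and hence $\tilde\rho$ is indeed a $\gamma$-best-responding strategy.

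It then remains to bound the difference in objectives, for which I would use the second expression $\esu_\prior(\pi,\rho) = \sum_{s} \pi(s)\, \E_{a\sim\rho(s)}[\su(a,\mu_s)]$. For each fixed $s$, the distributions $\rho(s)$ and $\tilde\rho(s)$ differ only in that mass of total weight at most $\delta$ has been moved from actions outside $A_\pi^\gamma(s)$ onto $a^*_\pi(s)$. Since $\su(a,\mu_s) \in [0,1]$ for every action $a$, moving probability mass of total weight $\le \delta$ changes $\E_{a\sim\cdot}[\su(a,\mu_s)]$ by at most $\delta$ in absolute value. Weighting by $\pi(s)$, summing over $s$, and using $\sum_{s} \pi(s) = 1$ then yields $|\esu_\prior(\pi,\rho) - \esu_\prior(\pi,\tilde\rho)| \le \delta$.

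This argument is essentially bookkeeping, so I do not expect a genuine obstacle; the only two points requiring care are (i) redirecting the misplaced mass to an action guaranteed to lie in $A_\pi^\gamma(s)$, which is exactly why the exact best response $a^*_\pi(s)$ is the safe target, and (ii) handling signals of zero marginal probability, which drop out of both the objective and the constraint and so may be defined freely.
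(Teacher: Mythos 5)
Your proof is correct and follows essentially the same approach as the paper: both modify $\rho$ only on the at-most-$\delta$ mass placed outside $A_\pi^\gamma(s)$, relocate that mass onto $\gamma$-best-responding actions (you send it all to $a^*_\pi(s)$, the paper renormalizes proportionally over $A_\pi^\gamma(s)$ -- an immaterial difference), and bound the change in the sender's utility by $\delta$ using $\su \in [0,1]$. Your explicit handling of signals with $\pi(s)=0$ is a small point of care the paper leaves implicit.
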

This lemma is proved in Appendix~\ref{app:lem:restrict_to_gamma-best}.

\paragraph{\bf Direct-revelation schemes, obedient strategy, and ``advantage''}
We define some notions that will be used in the proofs. 
A signaling scheme $\pi$ is called \emph{a direct-revelation scheme} if the signal space $S$ is equal to the action space $A$ and a sent signal $s\in S = A$ is interpreted as an action recommended to the receiver to take.  Under a direct-revelation scheme, the receiver may use the \emph{obedient strategy} $\rho^{\mathrm{ob}}$, which simply follows the recommendation from the sender: $\rho^{\mathrm{ob}}(s) = s$ with probability $1$.  The obedient strategy needs not be a best-responding strategy for the receiver.   We define the \emph{advantage of a direct-revelation scheme} as the extent to which the obedient strategy is best responding under that scheme:
\begin{definition}
	Let $\pi$ be a direct-revelation scheme.  The \emph{advantage of signal $s$ under $\pi$} is the minimal difference between the receiver's expected utilities of taking the recommended action $s$ and taking any other action $a\ne s$ given signal $s$: 
	\begin{equation}
	   \adv_\pi(s) \coloneqq \min_{a\ne s} \sum_{\omega \in \Omega} \pi(\omega | s) \Big( \ru(s, \omega) - \ru(a, \omega) \Big) = \ru(s, \mu_s) - \max_{a\neq s} \ru(a, \mu_s). 
	\end{equation}
	The \emph{advantage of $\pi$} is the minimal advantage over all signals: $ \adv_\pi \coloneqq \min_{s\in S} \adv_\pi(s)$.  
\end{definition}

The obedient strategy $\rho^\ob$ is best-responding under a direct-revelation scheme $\pi$ iff $\adv_\pi \ge 0$, and $\gamma$-best-responding iff $\adv_\pi \ge -\gamma$. \yc{Is it the reason to introducing the notion of advantage? We may want to explain why we define advantage.}
\tao{Maybe we can move the definition of advantage to Section 3.1 when we define robustification.  Basically, robustification is about increasing the advantage of a signaling scheme.  We can still keep the definitions of direct-revelation scheme and obedient stratege here. }

\section{Main Result: Persuading an Approximately Best Responding Agent}
\label{sec:main-result}
Our main results are twofold.
Under natural assumptions, we show: 
(1) a lower bound on $\underline{\OBJ}(\prior, \gamma, \delta)$: there exists a signaling scheme that guarantees the sender an expected utility almost as good as its optimal utility in the classic model, no matter what approximately best-responding strategy the receiver uses;
(2) an upper bound on $\overline{\OBJ}(\prior, \gamma, \delta)$: there is no signaling scheme that gives the sender much more utility than its optimal utility in the classic model, even if the receiver uses the approximately best-responding strategy that is best for the sender.
Both results are proved using the idea of robustification of a Bayesian persuasion scheme.

We make the following assumption on the receiver's utility:  
\begin{assumption}\label{assump:unique_optimal_action_strong}
The receiver's utility function $\ru(\cdot, \cdot)$ satisfies the following: 
\begin{itemize} 
\item For every state $\omega \in \Omega$, there exists a \emph{unique} optimal action $a_\omega = \argmax_{a\in A} \ru(a, \omega)$.  Let $\Delta$ be the minimal gap between the utilities of the optimal action and any sub-optimal action: 
\begin{equation}
    \Delta \coloneqq \min_{\omega \in \Omega, a' \ne a_\omega} \big\{ \ru(a_\omega, \omega) - \ru(a', \omega) \big\} \,>\, 0. 
\end{equation}
\item For every action $a\in A$, there exists a state for which $a$ is the unique optimal action.  Let $\Omega_a \coloneqq \{\omega\in \Omega: a_\omega = a\} \ne \emptyset$ denote the set of states for which $a$ is the unique optimal action.  
\end{itemize} 
\end{assumption}
\noindent The second part of Assumption~\ref{assump:unique_optimal_action_strong} in particular implies that the number of states is greater than or equal to the number of actions. 

Let $\OPT^\BP(\prior)$ be the sender's expected utility in the optimal Bayesian persuasion scheme under prior $\prior$ in the classic setting \citep{kamenica_bayesian_2011}, which assumes that the receiver always best responds and breaks ties in favor of the sender.  In our notation,
\[ \OPT^\BP(\prior) = \overline{\OBJ}(\prior, 0, 0) = \sup_{\pi} \sup_{\rho: \text{best-responding}} \esu_\prior(\pi, \rho).\]

\begin{theorem}\label{thm:BP-robustness}
Assume Assumption \ref{assump:unique_optimal_action_strong}. 
 Suppose $\prior_{\min} \coloneqq \min_{\omega \in \Omega} \prior(\omega) > 0$ and $\frac{\gamma}{\prior_{\min} \Delta} < 1$.  Then, 
\begin{equation}\label{eq:main-result}
\OPT^\BP(\prior) - \tfrac{\gamma}{\prior_{\min} \Delta} - \delta ~ \le ~ \underline{\OBJ}(\prior, \gamma, \delta) ~ \le ~ \overline{\OBJ}(\prior, \gamma, \delta)  ~ \le ~ \OPT^\BP(\prior) + \tfrac{\gamma}{\prior_{\min} \Delta} + \delta. 
\end{equation}
\end{theorem}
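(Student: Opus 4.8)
The plan is to prove the two nontrivial inequalities (the middle one $\underline{\OBJ} \le \overline{\OBJ}$ is immediate since an infimum over the nonempty set of $(\gamma,\delta)$-best-responding strategies is at most the supremum over that same set). By Lemma~\ref{lem:restrict_to_gamma-best} it suffices to prove the $\delta=0$ statements $\OPT^\BP(\prior) - \frac{\gamma}{\prior_{\min}\Delta} \le \underline{\OBJ}(\prior,\gamma,0)$ and $\overline{\OBJ}(\prior,\gamma,0) \le \OPT^\BP(\prior) + \frac{\gamma}{\prior_{\min}\Delta}$: replacing every $(\gamma,\delta)$-best-responding strategy by a nearby $\gamma$-best-responding one shifts the sender's value by at most $\delta$, accounting for the two $\pm\delta$ terms. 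Both $\delta=0$ bounds rest on a single robustification device. Let $\pi^{\mathrm{rev}}$ be the direct-revelation scheme recommending the unique optimal action $a_\omega$ at each state $\omega$; since signal $a$ is then sent exactly on $\Omega_a$, its posterior is supported on $\Omega_a$, so Assumption~\ref{assump:unique_optimal_action_strong} gives $\adv_{\pi^{\mathrm{rev}}}(a) \ge \Delta$ and marginal $\pi^{\mathrm{rev}}(a) = \prior(\Omega_a) \ge \prior_{\min}$. For any direct-revelation scheme $\sigma$ and $\beta\in[0,1]$, I would analyze the mixture $\sigma_\beta = (1-\beta)\sigma + \beta\,\pi^{\mathrm{rev}}$ (mixing the conditional signal distributions). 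Working with the \emph{unnormalized} advantage $\sum_\omega \sigma_\beta(\omega,a)(\ru(a,\omega)-\ru(a',\omega))$, which is linear in the mixture, one finds it is at least $\beta\,\pi^{\mathrm{rev}}(a)\Delta \ge \beta\prior_{\min}\Delta$ plus the (possibly negative) contribution of $\sigma$; dividing by $\sigma_\beta(a)\le 1$ then lower-bounds $\adv_{\sigma_\beta}$. Simultaneously $\esu_\prior(\sigma_\beta,\rho^\ob)$ is linear in $\beta$ and differs from $\esu_\prior(\sigma,\rho^\ob)$ by at most $\beta$ since utilities lie in $[0,1]$. In short: mixing in a $\beta$-fraction of $\pi^{\mathrm{rev}}$ raises the advantage by $\ge \beta\prior_{\min}\Delta$ at a cost of at most $\beta$ in sender value.

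For the lower bound I would start from an optimal classic scheme, which by the revelation principle can be taken direct-revelation with $\adv_{\pi^*}\ge 0$ and $\esu_\prior(\pi^*,\rho^\ob) = \OPT^\BP(\prior)$. Applying the device with $\sigma=\pi^*$ and any $\alpha > \frac{\gamma}{\prior_{\min}\Delta}$ (which exists since $\frac{\gamma}{\prior_{\min}\Delta}<1$) produces $\pi' = \pi^*_\alpha$ with $\adv_{\pi'}\ge \alpha\prior_{\min}\Delta > \gamma$ and $\esu_\prior(\pi',\rho^\ob)\ge (1-\alpha)\OPT^\BP(\prior) \ge \OPT^\BP(\prior) - \alpha$. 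Because every signal's advantage strictly exceeds $\gamma$, the recommended action is the \emph{unique} $\gamma$-best-responding action there, so the obedient strategy is the only $\gamma$-best-responding strategy against $\pi'$ and $\inf_{\rho}\esu_\prior(\pi',\rho) = \esu_\prior(\pi',\rho^\ob)$. Letting $\alpha \downarrow \frac{\gamma}{\prior_{\min}\Delta}$ and taking the supremum over $\pi$ yields $\underline{\OBJ}(\prior,\gamma,0)\ge \OPT^\BP(\prior) - \frac{\gamma}{\prior_{\min}\Delta}$.

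For the upper bound, take an arbitrary $\pi$ and $\gamma$-best-responding $\rho$. I would first fold $(\pi,\rho)$ into the induced direct-revelation scheme $\hat\pi(\omega,a) = \sum_s \pi(\omega,s)\rho(a|s)$, whose obedient value equals $\esu_\prior(\pi,\rho)$. Since $\rho$ only plays $\gamma$-best-responding actions, the same unnormalized-advantage computation shows $\adv_{\hat\pi}\ge -\gamma$. Now apply the device with $\sigma=\hat\pi$ and $\beta = \frac{\gamma}{\prior_{\min}\Delta}$: this choice makes the boosted advantage nonnegative, so $\pi'' = \hat\pi_\beta$ has $\adv_{\pi''}\ge 0$, i.e.\ obedience is best-responding, making $(\pi'',\rho^\ob)$ a feasible classic pair with $\esu_\prior(\pi'',\rho^\ob)\le \OPT^\BP(\prior)$. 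Finally $\esu_\prior(\pi,\rho) - \esu_\prior(\pi'',\rho^\ob) = \beta\big(\esu_\prior(\pi,\rho) - \esu_\prior(\pi^{\mathrm{rev}},\rho^\ob)\big)\le \beta$, giving $\esu_\prior(\pi,\rho)\le \OPT^\BP(\prior) + \frac{\gamma}{\prior_{\min}\Delta}$; taking the supremum over $\pi$ and $\rho$ completes the bound.

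The main obstacle is the robustification device itself, specifically controlling the advantage of the mixture. The advantage is a \emph{nonlinear} function of the scheme because of the Bayesian renormalization by the signal marginal, so one cannot simply average the advantages of $\sigma$ and $\pi^{\mathrm{rev}}$. The idea that makes everything go through is to track the unnormalized quantity $\sum_\omega \sigma_\beta(\omega,a)(\ru(a,\omega)-\ru(a',\omega))$, which \emph{is} linear in $\beta$, bound it below by $\beta\prior_{\min}\Delta$ (using that $\pi^{\mathrm{rev}}$'s signal $a$ fires only on $\Omega_a$ and that $\prior(\Omega_a)\ge\prior_{\min}$), and only then divide by $\sigma_\beta(a)\le 1$; the factor $\frac{1}{\prior_{\min}}$ in the final bound is precisely the price of this normalization. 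A secondary point to handle carefully is the strict-versus-weak advantage gap in the lower bound, resolved by taking $\alpha$ strictly above $\frac{\gamma}{\prior_{\min}\Delta}$ and passing to the supremum.
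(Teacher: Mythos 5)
Your proposal is correct and follows essentially the same route as the paper: reduce to $\delta=0$ via Lemma~\ref{lem:restrict_to_gamma-best}, then use exactly the paper's robustification device (mixing a direct-revelation scheme with the scheme that recommends $a_\omega$ at each $\omega$, controlling the advantage through the unnormalized, linear quantity) to prove the lower bound from $\pi^*$ and the upper bound after folding $(\pi,\rho)$ into a direct-revelation scheme. The only (harmless) deviation is in the upper bound, where you fold a possibly randomized $\rho$ directly into $\hat\pi(\omega,a)=\sum_s\pi(\omega,s)\rho(a\smid s)$, whereas the paper first passes to a deterministic sender-optimal $\gamma$-best-responding strategy before applying its Claim~\ref{claim:deterministic-direct}; your version is a slight streamlining of the same idea.
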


We remark that Theorem~\ref{thm:BP-robustness} holds for all bounded utility functions $\su(\cdot, \cdot)$ for the sender. 

\paragraph{Overview of the Proof of Theorem \ref{thm:BP-robustness}}
According to Lemma~\ref{lem:restrict_to_gamma-best}, we only need to prove the theorem for $\gamma$-best-responding strategies (where $\delta = 0$): 
\begin{equation}\label{eq:main-result-gamma-best-responding}
\OPT^\BP(\prior) - \tfrac{\gamma}{\prior_{\min} \Delta} ~ \le ~ \underline{\OBJ}(\prior, \gamma, 0) ~ \le ~ \overline{\OBJ}(\prior, \gamma, 0) ~ \le ~ \OPT^\BP(\prior) + \tfrac{\gamma}{\prior_{\min} \Delta}. 
\end{equation}
\begin{claim}\label{claim:implies}
\eqref{eq:main-result-gamma-best-responding} implies \eqref{eq:main-result}.
\end{claim}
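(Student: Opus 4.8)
The plan is to deduce \eqref{eq:main-result} from \eqref{eq:main-result-gamma-best-responding} by showing that replacing the parameter $\delta$ with $0$ perturbs each of $\underline{\OBJ}$ and $\overline{\OBJ}$ by at most $\delta$, and then chaining these estimates with the already-established bounds for the $\delta=0$ case. Everything rests on Lemma~\ref{lem:restrict_to_gamma-best} together with the elementary observation that, for a fixed $\pi$, every $\gamma$-best-responding strategy is also $(\gamma,\delta)$-best-responding, so the feasible set of the inner optimization only grows as $\delta$ increases from $0$.

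First I would handle the \emph{maximax} term $\overline{\OBJ}$. Fix a signaling scheme $\pi$. Since the set of $\gamma$-best-responding strategies is contained in the set of $(\gamma,\delta)$-best-responding strategies, the inner supremum can only grow, so $\overline{\OBJ}(\prior,\gamma,0)\le\overline{\OBJ}(\prior,\gamma,\delta)$. For the reverse estimate, take any $(\gamma,\delta)$-best-responding $\rho$; Lemma~\ref{lem:restrict_to_gamma-best} produces a $\gamma$-best-responding $\tilde\rho$ (depending on $\pi$ and $\rho$) with $\esu_\prior(\pi,\rho)\le\esu_\prior(\pi,\tilde\rho)+\delta$, and bounding the right-hand side by its own supremum yields, for each $\pi$,
\[ \sup_{\rho:(\gamma,\delta)\text{-best-responding}} \esu_\prior(\pi,\rho) \;\le\; \sup_{\tilde\rho:\gamma\text{-best-responding}} \esu_\prior(\pi,\tilde\rho) + \delta. \]
Taking $\sup_\pi$ gives $\overline{\OBJ}(\prior,\gamma,\delta)\le\overline{\OBJ}(\prior,\gamma,0)+\delta$, which combined with the rightmost inequality of \eqref{eq:main-result-gamma-best-responding} produces $\overline{\OBJ}(\prior,\gamma,\delta)\le\OPT^\BP(\prior)+\frac{\gamma}{\prior_{\min}\Delta}+\delta$, the rightmost inequality of \eqref{eq:main-result}.

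Next I would treat the \emph{maximin} term $\underline{\OBJ}$ symmetrically, now tracking that enlarging a feasible set shrinks an infimum. Enlarging from $\gamma$- to $(\gamma,\delta)$-best-responding strategies gives $\underline{\OBJ}(\prior,\gamma,\delta)\le\underline{\OBJ}(\prior,\gamma,0)$. In the other direction, for any $(\gamma,\delta)$-best-responding $\rho$ the lemma supplies a $\gamma$-best-responding $\tilde\rho$ with $\esu_\prior(\pi,\rho)\ge\esu_\prior(\pi,\tilde\rho)-\delta\ge\inf_{\tilde\rho:\gamma\text{-best-responding}}\esu_\prior(\pi,\tilde\rho)-\delta$; taking the infimum over $\rho$ and then $\sup_\pi$ gives $\underline{\OBJ}(\prior,\gamma,\delta)\ge\underline{\OBJ}(\prior,\gamma,0)-\delta$, and the leftmost inequality of \eqref{eq:main-result-gamma-best-responding} then yields $\underline{\OBJ}(\prior,\gamma,\delta)\ge\OPT^\BP(\prior)-\frac{\gamma}{\prior_{\min}\Delta}-\delta$. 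The middle inequality $\underline{\OBJ}(\prior,\gamma,\delta)\le\overline{\OBJ}(\prior,\gamma,\delta)$ is immediate from the definitions, so \eqref{eq:main-result} follows.

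I expect no genuine obstacle: the argument is pure bookkeeping of suprema and infima. The single point requiring care is that Lemma~\ref{lem:restrict_to_gamma-best} is stated for a \emph{fixed} $\pi$ and the approximating strategy $\tilde\rho$ it returns may depend on both $\pi$ and $\rho$. I must therefore apply the lemma \emph{inside} the inner optimization, establishing the per-$\pi$ inequalities before the outer $\sup_\pi$ is taken, and never interchange the order; monotonicity of $\sup$ then preserves the inequalities and the claim goes through.
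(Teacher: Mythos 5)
Your proof is correct and follows essentially the same route as the paper: apply Lemma~\ref{lem:restrict_to_gamma-best} inside the inner optimization to show $\overline{\OBJ}(\prior,\gamma,\delta)\le\overline{\OBJ}(\prior,\gamma,0)+\delta$ and $\underline{\OBJ}(\prior,\gamma,\delta)\ge\underline{\OBJ}(\prior,\gamma,0)-\delta$, then chain with \eqref{eq:main-result-gamma-best-responding}. Your explicit care about quantifier order (establishing the per-$\pi$ inequality before taking $\sup_\pi$) is exactly the right bookkeeping, and the extra monotonicity observations are harmless.
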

\noindent This claim is proved in Appendix~\ref{app:proof_claims:implies}.

Equation \eqref{eq:main-result-gamma-best-responding} is then proved by the following two lemmas, which proves Theorem~\ref{thm:BP-robustness}.  

\begin{lemma}[Lower bound]\label{lem:lower_bound}
	$\underline{\OBJ}(\prior, \gamma, 0) \ge \OPT^\BP(\prior) - \tfrac{\gamma}{\prior_{\min} \Delta}$.  
\end{lemma}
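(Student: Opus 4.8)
The plan is to prove the bound by \emph{robustifying} an optimal classic scheme: I blend it with the fully-revealing scheme to create a strict preference for obedience, so that even a worst-case $\gamma$-best-responder is forced to follow the recommendations, while the sender sacrifices only $O(\gamma)$ utility. Concretely, let $\pi^\star$ be an optimal Bayesian persuasion scheme in the classic model. By the revelation principle I may assume $\pi^\star$ is a direct-revelation scheme under which obedience is best responding, i.e. $\adv_{\pi^\star}(s)\ge 0$ for every $s$, and $\esu_\prior(\pi^\star,\rho^\ob)=\OPT^\BP(\prior)$. Let $\pi^{\mathrm{full}}$ be the fully-revealing direct scheme that recommends the unique optimal action $a_\omega$ at state $\omega$; under $\pi^{\mathrm{full}}$ the posterior at signal $a$ is supported on $\Omega_a$, where $a$ beats every other action by at least $\Delta$, so $\adv_{\pi^{\mathrm{full}}}(a)\ge\Delta$ and $\pi^{\mathrm{full}}(a)=\prior(\Omega_a)\ge\prior_{\min}>0$. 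For $\lambda\in(0,1)$ I define the mixed direct-revelation scheme $\pi_\lambda(s\mid\omega)=(1-\lambda)\pi^\star(s\mid\omega)+\lambda\,\pi^{\mathrm{full}}(s\mid\omega)$.

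The key computation is to lower-bound $\adv_{\pi_\lambda}$. Since $\pi_\lambda(\omega,s)=(1-\lambda)\pi^\star(\omega,s)+\lambda\,\pi^{\mathrm{full}}(\omega,s)$, the posterior $\mu^\lambda_s$ is the $\pi_\lambda(s)$-normalized mixture of the two joint-weight vectors; writing out $\ru(s,\mu^\lambda_s)-\ru(a,\mu^\lambda_s)$ for $a\neq s$ therefore produces the numerator $(1-\lambda)\pi^\star(s)\big[\ru(s,\mu^\star_s)-\ru(a,\mu^\star_s)\big]+\lambda\,\pi^{\mathrm{full}}(s)\big[\ru(s,\mu^{\mathrm{full}}_s)-\ru(a,\mu^{\mathrm{full}}_s)\big]$ over the denominator $\pi_\lambda(s)$. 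The first bracket is nonnegative by $\adv_{\pi^\star}\ge 0$, the second is at least $\Delta$; combined with $\pi^{\mathrm{full}}(s)\ge\prior_{\min}$ and $\pi_\lambda(s)\le 1$, this yields $\adv_{\pi_\lambda}(s)\ge\lambda\,\pi^{\mathrm{full}}(s)\Delta\ge\lambda\,\prior_{\min}\Delta$ for every $s$, hence $\adv_{\pi_\lambda}\ge\lambda\,\prior_{\min}\Delta$.

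Now fix any $\lambda>\tfrac{\gamma}{\prior_{\min}\Delta}$, which is possible with $\lambda<1$ since $\tfrac{\gamma}{\prior_{\min}\Delta}<1$ by assumption. Then $\adv_{\pi_\lambda}>\gamma$, so at every signal $s$ the recommended action strictly beats each alternative by more than $\gamma$; therefore $A_{\pi_\lambda}^\gamma(s)=\{s\}$, and every $\gamma$-best-responding strategy must coincide with $\rho^\ob$. Consequently, using that $\esu_\prior(\cdot,\rho^\ob)$ is linear in the joint distribution, $\inf_{\rho:\gamma\text{-BR}}\esu_\prior(\pi_\lambda,\rho)=\esu_\prior(\pi_\lambda,\rho^\ob)=(1-\lambda)\OPT^\BP(\prior)+\lambda\,\esu_\prior(\pi^{\mathrm{full}},\rho^\ob)\ge(1-\lambda)\OPT^\BP(\prior)\ge\OPT^\BP(\prior)-\lambda$, where the last two inequalities use $\su\in[0,1]$ and $\OPT^\BP(\prior)\le 1$. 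Taking the supremum over admissible $\pi$ and letting $\lambda\downarrow\tfrac{\gamma}{\prior_{\min}\Delta}$ gives $\underline{\OBJ}(\prior,\gamma,0)\ge\OPT^\BP(\prior)-\tfrac{\gamma}{\prior_{\min}\Delta}$.

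The main obstacle I anticipate is the advantage computation under mixing: because the posterior of $\pi_\lambda$ is a $\pi_\lambda(s)$-weighted average, the advantage does not mix linearly and one must control the denominator; the clean bound comes from combining $\pi_\lambda(s)\le 1$ with the guaranteed $\Delta$-gap supplied by the revealing component $\pi^{\mathrm{full}}$ (whose role is precisely to inject a uniform lower bound on the advantage at every signal). A secondary point to handle carefully is the strict-versus-weak inequality at the boundary $\lambda=\tfrac{\gamma}{\prior_{\min}\Delta}$, which I sidestep by taking $\lambda$ strictly larger and then passing to the supremum; I should also explicitly justify, via the revelation principle and the tie-breaking convention, that $\pi^\star$ can be taken to be a direct scheme with $\adv_{\pi^\star}\ge 0$ and sender value exactly $\OPT^\BP(\prior)$.
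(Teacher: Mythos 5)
Your proposal is correct and is essentially the paper's own argument: your mixture $\pi_\lambda$ with the "recommend $a_\omega$" scheme is exactly the paper's $\alpha$-robustification (Definition \ref{def:robustification}), your advantage computation matches Lemma \ref{lem:robustification}, and the concluding limit $\lambda \downarrow \gamma/(\prior_{\min}\Delta)$ is the same. The only cosmetic difference is that you bound the utility loss by linearity of $\esu_\prior(\cdot,\rho^\ob)$ in the joint distribution rather than via the total-variation bound of Lemma \ref{lem:robustification}, which yields the same $-\lambda$ term.
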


\begin{lemma}[Upper bound]\label{lem:upper_bound}
$\overline{\OBJ}(\prior, \gamma, 0) \le \OPT^\BP(\prior) + \tfrac{\gamma}{\prior_{\min} \Delta}$.
\end{lemma}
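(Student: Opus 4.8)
The plan is to prove both lemmas through a single \emph{robustification} construction: mixing an arbitrary direct-revelation scheme with the fully-revealing scheme that recommends the unique optimal action at each state. Writing $\pi^{\mathrm f}$ for the scheme that, at state $\omega$, deterministically sends the signal $a_\omega \in A$, Assumption~\ref{assump:unique_optimal_action_strong} supplies two facts I will use repeatedly: its advantage satisfies $\adv_{\pi^{\mathrm f}}(a) \ge \Delta$ at every signal $a$ (the posterior at signal $a$ is supported on $\Omega_a$, where $a$ beats every competitor by at least $\Delta$), and its marginal satisfies $\pi^{\mathrm f}(a) = \sum_{\omega\in\Omega_a}\prior(\omega) \ge \prior_{\min}$ since $\Omega_a\neq\emptyset$. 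The workhorse is an elementary inequality on \emph{unnormalized} advantage: for direct-revelation schemes $\pi_1,\pi_2$ over the common signal space $A$ and their mixture $\pi_\lambda=(1-\lambda)\pi_1+\lambda\pi_2$, the quantity $\pi(a)\adv_\pi(a) = \min_{a'\neq a}\sum_\omega \pi(\omega,a)(\ru(a,\omega)-\ru(a',\omega))$ is superadditive, $\pi_\lambda(a)\adv_{\pi_\lambda}(a) \ge (1-\lambda)\pi_1(a)\adv_{\pi_1}(a) + \lambda\,\pi_2(a)\adv_{\pi_2}(a)$, because the minimum of a sum dominates the sum of minima. Combined with $\pi(a)\le 1$ and the two facts above, mixing in a $\lambda$ fraction of $\pi^{\mathrm f}$ shifts every signal's advantage upward by roughly $\lambda\prior_{\min}\Delta$.

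For Lemma~\ref{lem:lower_bound} I would start from the optimal classic scheme, which the revelation principle lets me take as a direct-revelation scheme $\pi^*$ with $\adv_{\pi^*}\ge 0$ and $\esu_\prior(\pi^*,\rho^\ob)=\OPT^\BP(\prior)$. Setting $\pi_\lambda=(1-\lambda)\pi^*+\lambda\pi^{\mathrm f}$, superadditivity gives $\pi_\lambda(a)\adv_{\pi_\lambda}(a)\ge \lambda\,\pi^{\mathrm f}(a)\Delta\ge\lambda\prior_{\min}\Delta$, hence $\adv_{\pi_\lambda}(a)\ge\lambda\prior_{\min}\Delta$. Choosing any $\lambda>\frac{\gamma}{\prior_{\min}\Delta}$ makes $\adv_{\pi_\lambda}>\gamma$ strictly, so the recommended action is the \emph{unique} $\gamma$-best-responding action at every signal and every $\gamma$-best-responding receiver is forced to be obedient. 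A one-line convexity computation, $\OPT^\BP(\prior)-\esu_\prior(\pi_\lambda,\rho^\ob)=\lambda(\OPT^\BP(\prior)-\esu_\prior(\pi^{\mathrm f},\rho^\ob))\le\lambda$, then lower-bounds the guaranteed utility by $\OPT^\BP(\prior)-\lambda$; letting $\lambda\downarrow\frac{\gamma}{\prior_{\min}\Delta}$ and using that $\underline{\OBJ}$ is a supremum over schemes finishes the bound.

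For Lemma~\ref{lem:upper_bound} I would take an arbitrary scheme $\pi$ and an arbitrary $\gamma$-best-responding strategy $\rho$, and repackage the pair as a single direct-revelation scheme $\tilde\pi$ whose signal is the action $\rho$ ends up playing: $\tilde\pi(\omega,a)=\sum_s\pi(\omega,s)\Pr[\rho(s)=a]$. By construction $\esu_\prior(\pi,\rho)=\esu_\prior(\tilde\pi,\rho^\ob)$. The crucial observation is that, because $\rho$ only ever plays $\gamma$-best-responding actions under $\pi$, unfolding the definition of $\tilde\pi$ shows $\adv_{\tilde\pi}(a)\ge-\gamma$ at every signal (each contributing signal $s$ obeys $\ru(a,\mu_s)-\ru(a',\mu_s)\ge-\gamma$, and this survives the positive-weighted aggregation). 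Now I robustify in the opposite direction: with $\pi'=(1-\lambda)\tilde\pi+\lambda\pi^{\mathrm f}$ and $\lambda=\frac{\gamma}{\prior_{\min}\Delta}$, superadditivity gives $\pi'(a)\adv_{\pi'}(a)\ge-\gamma+\lambda\prior_{\min}\Delta\ge 0$, so $\rho^\ob$ is genuinely best-responding under $\pi'$ and hence $\esu_\prior(\pi',\rho^\ob)\le\OPT^\BP(\prior)$. The same convexity identity bounds $\esu_\prior(\tilde\pi,\rho^\ob)-\esu_\prior(\pi',\rho^\ob)=\lambda(\esu_\prior(\tilde\pi,\rho^\ob)-\esu_\prior(\pi^{\mathrm f},\rho^\ob))\le\lambda$, yielding $\esu_\prior(\pi,\rho)\le\OPT^\BP(\prior)+\lambda$; taking the supremum over $\pi$ and $\rho$ gives the claim.

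The main obstacle I anticipate is the bookkeeping of the normalization: the per-signal advantage $\adv_\pi(a)$ is itself not linear in the mixture, so I must work throughout with the unnormalized quantity $\pi(a)\adv_\pi(a)$ and only divide by $\pi(a)$ (bounding it by $1$) at the very end — this step is exactly what converts the clean superadditivity into the $\prior_{\min}$ factor in the final bound. The second delicate point is the strict-versus-weak inequality in the lower bound: at $\lambda=\frac{\gamma}{\prior_{\min}\Delta}$ the obedient action is only weakly $\gamma$-best-responding, so a worst-case receiver could still deviate, which is why I push $\lambda$ strictly above the threshold and pass to the limit using the supremum in the definition of $\underline{\OBJ}$. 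Everything else reduces to routine convexity together with the two structural facts about $\pi^{\mathrm f}$ supplied by Assumption~\ref{assump:unique_optimal_action_strong}.
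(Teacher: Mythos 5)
Your proof is correct and follows essentially the same route as the paper: collapse $(\pi,\rho)$ into a direct-revelation scheme with obedient play and advantage at least $-\gamma$, mix in a $\lambda=\frac{\gamma}{\prior_{\min}\Delta}$ fraction of the reveal-the-optimal-action scheme (this is exactly the paper's $\alpha$-robustification) to make obedience exactly best-responding, and pay at most $\lambda$ in sender utility. The only differences are organizational — you fold randomized strategies directly into the construction of $\tilde\pi$ instead of first passing to a sender-optimal deterministic selection, and you package the advantage computation as superadditivity of the unnormalized advantage rather than an explicit posterior decomposition — but the constants and the logic are identical.
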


The proofs of both Lemma~\ref{lem:lower_bound} and Lemma~\ref{lem:upper_bound} use the idea of \emph{robustification of a direct-revelation signaling scheme}, which is introduced in Section~\ref{sec:robustification}.  Section~\ref{sec:proof_lower_bound} and~\ref{sec:proof_upper_bound} then prove the two lemmas. 
We overview the main ideas here.
Roughly speaking, given any direct-revelation scheme $\pi$ under which the receiver \emph{weakly} prefers the obedient strategy $\rho^\ob$ to any other strategy $\rho$, we can turn $\pi$ into a new scheme $\pi'$ under which the receiver \emph{strictly} prefers the obedient strategy to any other strategy.  This guarantees that the receiver will still use the obedient strategy even if the receiver only approximately best responds.  Meanwhile, $\pi'$ and $\pi$ are ``close'' to each other, so the sender obtains similar expected utilities under $\pi'$ and $\pi$, with the receiver using the obedient strategy under both schemes.  By robustifying the optimal direct-revelation scheme $\pi^*$ in the classic setting, we obtain a scheme $\pi'$ that guarantees the sender an expected utility close to the expected utility of $\pi^*$, which is equal to the classic benchmark $\OPT^\BP(\prior)$; this proves the lower bound result (Lemma~\ref{lem:lower_bound}).  

To prove the upper bound result (Lemma~\ref{lem:upper_bound}),
we first turn any signaling scheme $\pi$ under which the receiver $\gamma$-best responds into a direct-revelation scheme $\pi^{\mathrm{direct}}$ under which the obedient strategy is a $\gamma$-best-responding strategy for the receiver, without decreasing the sender's utility in this step.
Then, we robustify $\pi^{\mathrm{direct}}$ into a new direct-revelation scheme $\pi'$ under which the obedient strategy is exactly best-responding.  Because the optimal scheme $\pi^*$ in the classic setting also guarantees that the obedient strategy is exactly best-responding, $\pi'$ cannot give the sender more utility than $\pi^*$ does.  Since $\pi^{\mathrm{direct}}$ and $\pi'$ are close, the sender's utility under $\pi^{\mathrm{direct}}$ thus cannot be much higher than that of $\pi^*$, which is the classic benchmark $\OPT^\BP(\prior)$.  This in turn implies that the sender's utility under $\pi$ cannot be much higher than $\OPT^\BP(\prior)$.  So the result is proved.

\subsection{Main Idea: Robustification}\label{sec:robustification}
Let $\pi$ be a direct-revelation signaling scheme. 
Let $0\le \alpha\le1$ be a parameter. 
We argue that, we can construct a new direct-revelation signaling scheme $\pi'$ from $\pi$ such that the advantages of all signals are improved, and the sender's expected utilities under $\pi'$ and $\pi$ are $\alpha$-closed.  We construct $\pi'$ as follows: 
For every state $\omega \in \Omega$, let the conditional distribution over signals $\pi'(\cdot | \omega)$ be: 
\begin{equation}\label{eq:construct_new_pi}
	\pi'(s | \omega) = (1-\alpha) \pi(s|\omega) + \alpha \mathbb{I}[s = a_\omega], \quad\quad \forall s\in S=A   
\end{equation}
($a_\omega$ was defined in Assumption~\ref{assump:unique_optimal_action_strong}).
In words, $\pi'$ does the following: given state $\omega$, with probability $1-\alpha$ send a signal (action recommendation) according to the scheme $\pi$ and with probability $\alpha$ recommend the action $s = a_\omega$ that is best for the receiver at that state. 
We note that $\pi'(\cdot | \omega)$ is a valid distribution since $\sum_{s\in S} \pi'(s | \omega) = (1-\alpha)\sum_{s\in S} \pi(s | \omega) + \alpha = 1$.
\begin{definition}[Robustification]
\label{def:robustification}
We call the $\pi'$ constructed in \eqref{eq:construct_new_pi} the \emph{$\alpha$-robustification of $\pi$}. 
\end{definition}

\begin{lemma}\label{lem:robustification}
	The $\alpha$-robustification $\pi'$ of $\pi$ satisfies the following: 
	\begin{itemize} 
	\item The marginal probability of signal $s$ under $\pi'$ is
	\begin{equation}
		\pi'(s) = (1-\alpha) \pi(s) + \alpha \prior(\Omega_s). 
	\end{equation}
	\item \label{claim:improve_adv} 
	The advantage of every signal $s\in S$ under $\pi'$ satisfies 
	\begin{equation}
		\adv_{\pi'}(s) \ge (1-\alpha) \frac{\pi(s)}{\pi'(s)} \adv_\pi(s) + \alpha \frac{\prior(\Omega_s)}{\pi'(s)}\Delta. 
	\end{equation} 
	\item  \label{claim:utility_difference}
		The sender's expected utilities under $\pi'$ and $\pi$ (assuming that the receiver uses the obedient strategy) satisfy
		\begin{equation}
			\big| \esu_{\prior}(\pi', \rho^{\mathrm{ob}}) -  \esu_{\prior}(\pi, \rho^{\mathrm{ob}}) \big| \le \alpha.
		\end{equation}  
	\end{itemize} 
\end{lemma}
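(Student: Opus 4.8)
The plan is to verify the three bullets in turn; each reduces to summing or conditioning a single identity, so the proof is essentially bookkeeping. The identity I would establish first is the joint-distribution form of the robustification: multiplying the defining equation \eqref{eq:construct_new_pi} by $\prior(\omega)$ yields
\[ \pi'(\omega, s) \;=\; (1-\alpha)\,\pi(\omega,s) \;+\; \alpha\,\prior(\omega)\,\mathbb{I}[s = a_\omega], \qquad \forall\, \omega\in\Omega,\ s\in S=A. \]
All three claims follow from this single relation by summing over $\omega$, conditioning on $s$, or pairing against $\su$.

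\emph{Marginal.} Summing the identity over $\omega$ splits into $(1-\alpha)\sum_\omega \pi(\omega,s) = (1-\alpha)\pi(s)$ and $\alpha\sum_{\omega:\,a_\omega = s}\prior(\omega) = \alpha\,\prior(\Omega_s)$, which is the stated formula. Since $\Omega_s\ne\emptyset$ (Assumption~\ref{assump:unique_optimal_action_strong}) and $\prior_{\min}>0$, we get $\pi'(s)\ge \alpha\,\prior(\Omega_s)>0$ for $\alpha>0$, so the posterior $\pi'(\omega|s)$ and the advantage $\adv_{\pi'}(s)$ are well-defined. \emph{Advantage.} This is the only step with real content. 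Fix $s$ and an alternative $a\ne s$. Using $\pi'(\omega|s)=\pi'(\omega,s)/\pi'(s)$ and the joint identity, I would write $\sum_\omega \pi'(\omega|s)\big(\ru(s,\omega)-\ru(a,\omega)\big)$ as $\tfrac{1}{\pi'(s)}$ times
\[ (1-\alpha)\!\sum_\omega \pi(\omega,s)\big(\ru(s,\omega)-\ru(a,\omega)\big) \;+\; \alpha\!\sum_{\omega\in\Omega_s}\!\prior(\omega)\big(\ru(s,\omega)-\ru(a,\omega)\big). \]
For the first sum, factoring out $\pi(s)$ recovers $\pi(s)\big(\ru(s,\mu_s)-\ru(a,\mu_s)\big)\ge \pi(s)\,\adv_\pi(s)$, since $\adv_\pi(s)$ is the minimum over alternatives. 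For the second sum, the key observation is that every $\omega\in\Omega_s$ has $a_\omega=s$, so $s$ is the \emph{unique} optimal action at $\omega$ and hence $\ru(s,\omega)-\ru(a,\omega)\ge\Delta$ by the definition of $\Delta$; summing gives at least $\prior(\Omega_s)\,\Delta$. Combining and then taking the minimum over $a\ne s$ (the lower bound is uniform in $a$) yields the claimed inequality. The one thing to watch is that $\adv_\pi(s)$ may be negative, but this is harmless because the displayed bound is valid for each individual $a$ before the minimization.

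\emph{Utility difference.} Under the obedient strategy $\esu_\prior(\pi,\rho^{\ob})=\sum_{\omega,s}\pi(\omega,s)\,\su(s,\omega)$, and likewise for $\pi'$. Subtracting and applying the joint identity, the difference equals
\[ \alpha\Big(\sum_{\omega}\prior(\omega)\,\su(a_\omega,\omega)\;-\;\sum_{\omega,s}\pi(\omega,s)\,\su(s,\omega)\Big). \]
Both sums are expectations of $\su\in[0,1]$ against probability distributions, hence lie in $[0,1]$, so their difference lies in $[-1,1]$ and the whole expression is at most $\alpha$ in absolute value. I expect no genuine obstacle here; the advantage computation is the heart of the lemma, and its only conceptual ingredient is invoking the unique-optimality gap $\Delta$ on the set $\Omega_s$, which is exactly what the $\alpha$-mass of extra ``honest'' recommendations in \eqref{eq:construct_new_pi} is designed to exploit.
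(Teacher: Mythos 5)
Your proposal is correct and follows essentially the same route as the paper: the same marginal computation, and the same decomposition of the posterior into a $(1-\alpha)$-weighted copy of $\pi(\cdot\mid s)$ plus an $\alpha$-weighted mass on $\Omega_s$, with $\adv_\pi(s)$ bounding the first part and the gap $\Delta$ bounding the second before minimizing over $a\ne s$. The only cosmetic difference is in the third bullet, where the paper routes through the bound $\dTV(\pi',\pi)\le\alpha$ while you compute the utility difference directly as $\alpha$ times a difference of two expectations in $[0,1]$; both are equally valid and elementary.
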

\begin{proof}
	By definition, the marginal probability $\pi'(s)$ is equal to
	\begin{align*}
		\pi'(s) = \sum_{\omega\in \Omega} \prior(\omega) \pi'(s | \omega) & = \sum_{\omega\in\Omega} \prior(\omega) \Big( (1-\alpha) \pi(s|\omega) + \alpha \mathbb{I}[s = a_\omega]\Big) \\
		& = (1-\alpha) \sum_{\omega\in\Omega}\prior(\omega)\pi(s|\omega) + \alpha \sum_{\omega \in \Omega} \prior(\omega) \mathbb{I}[s = a_\omega] \\
		&  = (1-\alpha) \pi(s) + \alpha \sum_{\omega \in \Omega_s} \prior(\omega)  ~~ = ~~ (1-\alpha) \pi(s) + \alpha \prior(\Omega_s).  
	\end{align*}
	
	To analyze the advantage $\adv_{\pi'}(s) = \min_{a\ne s} \sum_{\omega \in \Omega} \pi'(\omega | s) \big(\ru(s, \omega) - \ru(a, \omega) \big)$, we first write the posterior probability of a state $\omega \in \Omega$ of states given signal $s\in S$.
	\begin{align*}
		\pi'(\omega | s)  = \frac{\prior(\omega) \pi'(s|\omega)}{\pi'(s)} & = \frac{\prior(\omega) \Big( (1-\alpha) \pi(s|\omega) + \alpha \mathbb{I}[s = a_\omega] \Big)}{\pi(s)}\frac{\pi(s)}{\pi'(s)} \\
		& = (1-\alpha) \frac{\pi(s)}{\pi'(s)} \frac{\prior(\omega) \pi(s|\omega)}{\pi(s)} \,+\, \alpha \frac{\prior(\omega)}{\pi'(s)} \mathbb{I}[s = a_\omega]  \\
		& = (1-\alpha) \frac{\pi(s)}{\pi'(s)} \pi(\omega | s) \,+\, \alpha \frac{\prior(\omega)}{\pi'(s)} \mathbb{I}[s = a_\omega]. 
	\end{align*} 
	So, for any $a\ne s$,
	\begin{align*}
		& \sum_{\omega \in \Omega} \pi'(\omega | s) \Big(\ru(s, \omega) - \ru(a, \omega) \Big) \\
		& = (1-\alpha)\frac{\pi(s)}{\pi'(s)} \sum_{\omega\in\Omega}\pi(\omega|s) \Big(\ru(s, \omega) - \ru(a, \omega) \Big) \,+\, \alpha \frac{1}{\pi'(s)} \sum_{\omega \in \Omega_s} \prior(\omega) \Big(\ru(s, \omega) - \ru(a, \omega) \Big). 
	\end{align*}
	We note that $\sum_{\omega\in\Omega}\pi(\omega|s) \big(\ru(s, \omega) - \ru(a, \omega) \big) \ge \adv_\pi(s)$ by definition.  And recall from Assumption \ref{assump:unique_optimal_action_strong} that $\ru(s, \omega) - \ru(a, \omega) \ge \Delta$ for $\omega\in\Omega_s, a\ne s$. Therefore, we have 
	\begin{align*}
	\sum_{\omega \in \Omega} \pi'(\omega | s) \Big(\ru(s, \omega) - \ru(a, \omega) \Big) & \ge  (1-\alpha)\frac{\pi(s)}{\pi'(s)} \adv_\pi(s) ~ + ~ \alpha \frac{1}{\pi'(s)} \sum_{\omega \in \Omega_s} \prior(\omega) \Delta \\
	& = (1-\alpha)\frac{\pi(s)}{\pi'(s)} \adv_\pi(s) ~ + ~ \alpha \frac{\prior(\Omega_s)}{\pi'(s)} \Delta.
	\end{align*}
	Minimizing over $a\ne s$, we obtain $\adv_{\pi'}(s) \ge (1-\alpha)\frac{\pi(s)}{\pi'(s)} \adv_\pi(s) ~ + ~ \alpha \frac{\prior(\Omega_s)}{\pi'(s)} \Delta$.

	Finally, we prove the third item (sender's utility). 
	Consider the total variation distance between the two joint distributions of states and signals, $\pi'(\cdot, \cdot)$ and $\pi(\cdot, \cdot)$: 
	\begin{align*}
		\dTV(\pi', \pi) & = \frac{1}{2} \sum_{\omega, s} \Big| \pi'(\omega, s)  - \pi(\omega, s) \Big| 
		~=~ \frac{1}{2} \sum_{\omega} \prior(\omega) \sum_s \Big| \pi'(s|\omega)  - \pi(s|\omega) \Big|  \\
		& = \frac{1}{2} \sum_{\omega} \prior(\omega) \sum_s \Big| -\alpha \pi(s|\omega) + \alpha \mathbb{I}[s=a_\omega] \Big| \\
		& \le \frac{1}{2} \sum_{\omega} \prior(\omega) \Big( \sum_s \alpha \pi(s|\omega) + \alpha \Big) ~ = ~ \frac{1}{2} \sum_{\omega} \prior(\omega) 2\alpha ~ = ~ \alpha. 
	\end{align*}
	Since the utility $\su(s, \omega)$ is bounded in $[0, 1]$, the difference in expected utilities is bounded by: 
	\begin{align*}
		\Big| \esu_{\prior}(\pi', \rho^{\mathrm{ob}}) -  \esu_{\prior}(\pi, \rho^{\mathrm{ob}}) \Big| 
		 \, =\, \Big| \sum_{\omega, s} \pi'(\omega, s) \su(s, \omega)  \,-\, \sum_{\omega, s} \pi(\omega, s) \su(s, \omega) \Big| \,\le\,  \dTV(\pi', \pi)\cdot 1  \,\le\, \alpha, 
	\end{align*}
	as claimed. 
\end{proof}

\subsection{Lower Bounding: Proof of Lemma \ref{lem:lower_bound}}
\label{sec:proof_lower_bound}
Let $\pi^*$ be an optimal signaling scheme for the sender under prior $\prior$ in the classic Bayesian persuasion setting.  As shown by \cite{kamenica_bayesian_2011}, $\pi^*$ can be restricted to be a direct-revelation scheme.  In the classic setting, the receiver's obedient strategy $\rho^{\mathrm{ob}}$ is a best-responding strategy, which implies that the advantage $\adv_{\pi^*}(s) \ge 0$ for every $s\in S=A$.
We let $\alpha$ be an arbitrarily small real number that is greater than  $\frac{\gamma}{\prior_{\min}\Delta}$ and let $\pi'$ be the $\alpha$-robustification of $\pi^*$ (Definition \ref{def:robustification}).
According to Lemma~\ref{claim:improve_adv}, the advantage $\adv_{\pi'}(s)$ satisfies: 
\begin{align*}
	\adv_{\pi'}(s) \,\ge\, (1-\alpha) \frac{\pi^*(s)}{\pi'(s)} \adv_{\pi^*}(s) + \alpha \frac{\prior(\Omega_s)}{\pi'(s)}\Delta  \,\ge\, 0 + \alpha \frac{\prior(\Omega_s)}{\pi'(s)}\Delta \,=\, \alpha \frac{\prior(\Omega_s)}{\pi'(s)}\Delta \,\ge\, \alpha \prior_{\min}\Delta,
\end{align*}
where the last inequality is because $\prior(\Omega_s) \ge \prior_{\min}$ and $\pi'(s) \le 1$.
So, when $\alpha > \frac{\gamma}{\prior_{\min}\Delta}$ we have $\adv_{\pi'}(s) > \gamma$. 
This implies that the action $s$ is the \emph{unique} $\gamma$-best-responding action for the receiver given signal $s$ under signaling scheme $\pi'$, so the obedient strategy $\rho^{\mathrm{ob}}$ is the only $\gamma$-best responding strategy for the receiver.  Therefore, we have
\begin{align*}
	\underline{\OBJ}(\prior, \gamma, 0) \,\ge\, \inf_{\rho: \gamma\text{-best-responding}} \esu_\prior(\pi', \rho) \,=\, \esu_\prior(\pi', \rho^{\mathrm{ob}}) \,\ge\, \esu_\prior(\pi^*, \rho^{\mathrm{ob}}) - \alpha \,= \, \OPT^\BP(\prior) - \alpha, 
\end{align*}
where the second ``$\ge$'' is due to Lemma~\ref{claim:utility_difference}.  Letting $\alpha \to \frac{\gamma}{\prior_{\min}\Delta}$ proves the lemma.

\subsection{Upper Bounding: Proof of Lemma \ref{lem:upper_bound}}
\label{sec:proof_upper_bound}
Let $\pi$ be any signaling scheme and $\rho$ be any $\gamma$-best-responding receiver strategy.
We want to prove that the sender's expected utility satisfies $\esu_\prior(\pi, \rho) \le \OPT^{\BP}(\prior) + \frac{\gamma}{\mu_{\min}\Delta}$, so that $\overline{\OBJ}(\prior, \gamma, 0) = \sup_\pi \sup_{\rho: \gamma\text{-best-responding}} \esu_\prior(\pi, \rho) \le \OPT^{\BP}(\prior) + \frac{\gamma}{\mu_{\min}\Delta}$ is proved.

Recall that $A_\pi^\gamma(s)$ is the set of $\gamma$-best-responding actions to signal $s$ for the receiver, under signaling scheme $\pi$.  Let $a^*(s) = \argmax_{a\in A_\pi^\gamma(s)} \sum_{\omega \in \Omega} \pi(\omega | s) \su(a, \omega)$ be a $\gamma$-best-responding action of the receiver that maximizes the \emph{sender}'s expected utility, given signal $s$.  Let $\rho^*$ be the receiver strategy that takes action $a^*(s)$ deterministically at every signal $s \in S$.  Since $a^*(s)$ maximizes the sender's expected utility, we have
\begin{align}
\esu_\prior(\pi, \rho) & ~ =~  \sum_{s\in S} \pi(s) \sum_{\omega \in \Omega} \pi(\omega | s) \E_{a\sim \rho(s)}\big[ \su(a, \omega) \big] \\
& ~ \le~  \sum_{s\in S} \pi(s) \sum_{\omega \in \Omega} \pi(\omega | s) \su(a^*(s), \omega) ~ =~  \esu_\prior(\pi, \rho^*).  \label{eq:upperbound-step-1}
\end{align}
We also note that $\rho^*$ is a $\gamma$-best-responding strategy by definition. 

We then show that, for any signaling scheme $\pi$ and any deterministic (i.e., $\rho(s)$ is equal to some action $a\in A$ with probability $1$ for every $s\in S$) $\gamma$-best-responding receiver strategy $\rho^{\mathrm{det}}$, there exists a corresponding direct-revelation scheme $\pi^{\text{direct}}$ for which the obedient strategy $\rho^{\ob}$ is $\gamma$-best-responding and the sender obtains the same expected utility under $(\pi, \rho^{\mathrm{det}})$ and $(\pi^{\mathrm{direct}}, \rho^{\ob})$.
The idea is simple: we let the signaling scheme $\pi^{\mathrm{direct}}$ recommend action $a$ whenever the scheme $\pi$ sends a signal $s$ for which the receiver takes actions $a$.  Formally, let $S_a = \{s \in S : \rho^{\mathrm{det}}(s) = a \}$ be the set of signals for which the receiver takes action $a$.  Let
\begin{equation}\label{eq:pi-direct-definition}
    \pi^{\mathrm{direct}}(a \smid \omega) = \pi(S_a \smid \omega) = \sum_{s\in S_a} \pi(s \smid \omega), \quad \quad \forall \omega \in \Omega, a\in A. 
\end{equation}
\begin{claim}\label{claim:deterministic-direct}
The $\pi^{\mathrm{direct}}$ constructed above satisfies: (1) $\rho^\ob$ is a $\gamma$-best-responding strategy for the receiver; (2) the sender's expected utility $\esu_\prior(\pi, \rho^{\mathrm{det}}) = \esu_\prior(\pi^{\mathrm{direct}}, \rho^{\ob})$.
\end{claim}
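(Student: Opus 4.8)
The plan is to verify the two items separately, with part (2) being a one-line bookkeeping identity and part (1) resting on the observation that passing from $\pi$ to $\pi^{\mathrm{direct}}$ merely \emph{pools} the signals of $S_a$ into a single recommendation $a$, so that the posterior after recommendation $a$ is a convex combination of the original posteriors $\mu_s$, $s\in S_a$, and the per-signal $\gamma$-best-responding guarantees then survive averaging by linearity.

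For part (2), I would expand both utilities directly. Under $(\pi,\rho^{\mathrm{det}})$ the receiver plays $a$ exactly on the signals $s\in S_a$, so
\begin{equation*}
\esu_\prior(\pi,\rho^{\mathrm{det}}) = \sum_{\omega\in\Omega}\prior(\omega)\sum_{a\in A}\sum_{s\in S_a}\pi(s\mid\omega)\,\su(a,\omega) = \sum_{\omega\in\Omega}\prior(\omega)\sum_{a\in A}\pi^{\mathrm{direct}}(a\mid\omega)\,\su(a,\omega),
\end{equation*}
where the last equality is just the definition \eqref{eq:pi-direct-definition} of $\pi^{\mathrm{direct}}(a\mid\omega)=\sum_{s\in S_a}\pi(s\mid\omega)$. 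The right-hand side is precisely $\esu_\prior(\pi^{\mathrm{direct}},\rho^{\ob})$, since under the obedient strategy the receiver plays the recommended action $a$. This settles (2) with no inequalities involved.

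For part (1), fix a recommendation $a$ with $\pi^{\mathrm{direct}}(a)=\sum_{s\in S_a}\pi(s)>0$ (recommendations of zero probability impose no constraint). Using $\prior(\omega)\pi(s\mid\omega)=\pi(s)\mu_s(\omega)$, I would first show that the posterior induced by $\pi^{\mathrm{direct}}$ after recommendation $a$ is the $\pi(s)$-weighted average of the $\mu_s$:
\begin{equation*}
\mu^{\mathrm{direct}}_a(\omega) = \frac{\sum_{s\in S_a}\pi(s)\mu_s(\omega)}{\sum_{s\in S_a}\pi(s)} = \sum_{s\in S_a} w_s\,\mu_s(\omega), \qquad w_s \coloneqq \frac{\pi(s)}{\sum_{s'\in S_a}\pi(s')}.
\end{equation*}
Because $\ru(a',\cdot)$ is \emph{linear} in the belief, the same convex weights carry over to utilities: $\ru(a',\mu^{\mathrm{direct}}_a)=\sum_{s\in S_a}w_s\,\ru(a',\mu_s)$ for every action $a'$. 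Now I invoke the hypothesis that $\rho^{\mathrm{det}}$ is $\gamma$-best-responding: for each $s\in S_a$ with $\pi(s)>0$ we have $a=\rho^{\mathrm{det}}(s)\in A_\pi^\gamma(s)$, i.e.\ $\ru(a,\mu_s)\ge \ru(a^*_\pi(s),\mu_s)-\gamma \ge \ru(a',\mu_s)-\gamma$ for every $a'$. Averaging this pointwise inequality against the weights $w_s$ and using the linearity identity yields $\ru(a,\mu^{\mathrm{direct}}_a)\ge \ru(a',\mu^{\mathrm{direct}}_a)-\gamma$ for all $a'$, hence $a\in A_{\pi^{\mathrm{direct}}}^\gamma(a)$. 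Since this holds at every recommended signal of positive probability, $\rho^{\ob}$ is $\gamma$-best-responding under $\pi^{\mathrm{direct}}$.

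The only genuinely delicate point — and the step I expect to require the most care — is the ``mixture of posteriors'' identity together with the linearity of $\ru$ in the belief, because that is exactly what lets the \emph{per-signal} $\gamma$-best-responding guarantees survive the pooling of several signals into one recommendation; once that averaging is in place, both the inequality in (1) and the exact identity in (2) follow mechanically. I would also record the minor edge cases to keep the Bayes updates well-defined: signals $s\in S_a$ with $\pi(s)=0$ drop out of every sum, and recommendations $a$ with $\pi^{\mathrm{direct}}(a)=0$ are vacuous in the definition of $\gamma$-best-responding.
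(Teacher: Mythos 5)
Your proof is correct and follows essentially the same route as the paper's: both establish that the posterior after recommendation $a$ is the $\pi(s)$-weighted convex combination of the posteriors $\mu_s$ for $s\in S_a$, then average the per-signal inequality $\ru(a,\mu_s)-\ru(a',\mu_s)\ge-\gamma$ to get part (1), and expand the joint distribution to get the exact utility identity in part (2). Your explicit attention to the zero-probability edge cases is a minor addition the paper leaves implicit.
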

\noindent The proof of this claim is in Appendix~\ref{app:claim:deterministic-direct}. 

Since $\rho^*$ is a deterministic $\gamma$-best-responding receiver strategy, Claim~\ref{claim:deterministic-direct} implies that there exists a direct-revelation scheme $\pi^{\mathrm{direct}}$ such that
\begin{equation}  \label{eq:rho-star-direct}
\esu_\prior(\pi, \rho^*) = \esu_\prior(\pi^{\mathrm{direct}}, \rho^{\ob})
\end{equation} 
and $\rho^\ob$ is a $\gamma$-best-responding strategy for the receiver.  Then, we use the robustification idea in Section \ref{sec:robustification}: we robustify the scheme $\pi^{\mathrm{direct}}$ into a new direct-revelation scheme $\pi'$ for which $\rho^\ob$ is an exactly best-responding strategy for the receiver (where $\gamma = 0$).  In particular, according to Lemma~\ref{lem:robustification}, the $\alpha$-robustification scheme $\pi'$ of $\pi^{\mathrm{direct}}$ has advantage
\begin{align}
\adv_{\pi'}(s) \ge (1-\alpha) \frac{\pi^{\mathrm{direct}}(s)}{\pi'(s)} \adv_{\pi^{\mathrm{direct}}}(s) + \alpha \frac{\mu(\Omega_s)}{\pi'(s)}\Delta, \quad\quad  \forall s \in S = A. 
\end{align}
Since $\rho^\ob$ is $\gamma$-best-responding with respect to $\pi^{\mathrm{direct}}$, we have $\adv_{\pi^{\mathrm{direct}}}(s) \ge -\gamma$ by definition.  So, 
\begin{align}
\adv_{\pi'}(s) \ge - (1-\alpha) \frac{\pi^{\mathrm{direct}}(s)}{\pi'(s)} \gamma + \alpha \frac{\mu(\Omega_s)}{\pi'(s)}\Delta. 
\end{align}
To make sure $\rho^\ob$ is an exactly best-responding strategy with respect to $\pi'$, we need $\adv_{\pi'}(s) \ge 0$, and this is satisfied when 
\begin{align}
\alpha \ge \frac{\frac{\pi^{\mathrm{direct}}(s)}{\pi'(s)}\gamma}{\frac{\pi^{\mathrm{direct}}(s)}{\pi'(s)}\gamma + \frac{\mu(\Omega_s)}{\pi'(s)}\Delta} = \frac{\pi^{\mathrm{direct}}(s)\gamma}{\pi^{\mathrm{direct}}(s)\gamma + \mu(\Omega_s)\Delta} = \frac{\gamma}{\gamma + \frac{\mu(\Omega_s)}{\pi^{\mathrm{direct}}(s)}\Delta}. 
\end{align}
Since $\frac{\gamma}{\gamma + \frac{\mu(\Omega_s)}{\pi^{\mathrm{direct}}(s)}\Delta} \le \frac{\gamma}{\gamma + \mu(\Omega_s)\Delta} \le \frac{\gamma}{\mu(\Omega_s)\Delta} \le \frac{\gamma}{\mu_{\min}\Delta}$, it suffices to let $\alpha = \frac{\gamma}{\mu_{\min} \Delta}$.  With this choice of $\alpha$, by Lemma~\ref{lem:robustification} we have 
\begin{align}\label{eq:pi-prime-plus-alpha}
    \esu_\prior(\pi^{\mathrm{direct}}, \rho^\ob) \le \esu_\prior(\pi', \rho^\ob) + \alpha.
\end{align}
Since $\rho^\ob$ is best-responding with respect to $\pi'$, we have 
\begin{align}\label{eq:pi-prime-less-than-OPT}
     \esu_\prior(\pi', \rho^\ob) \,\le\, \sup_{\pi} \sup_{\rho:\text{best-responding}} \esu_\prior(\pi, \rho) \,=\, \OPT^{\BP}(\mu). 
\end{align}
Thus, we obtain the following chain of inequalities,  
\begin{align*}
  \esu_\prior(\pi, \rho) \stackrel{\eqref{eq:upperbound-step-1}}{\le} \esu_\prior(\pi, \rho^*) \stackrel{\eqref{eq:rho-star-direct}}{=} \esu_\prior(\pi^{\mathrm{direct}}, \rho^{\ob}) \stackrel{\eqref{eq:pi-prime-plus-alpha}}{\le} \esu_\prior(\pi', \rho^\ob) + \tfrac{\gamma}{\mu_{\min}\Delta}  \stackrel{\eqref{eq:pi-prime-less-than-OPT}}{\le} \OPT^{\BP}(\mu) + \tfrac{\gamma}{\mu_{\min}\Delta}. 
\end{align*}
which concludes the proof.

\section{Application: Persuading a Learning Agent}
\label{sec:learning}
In this section, we consider a repeated Bayesian persuasion model where the receiver, or agent, learns to respond to the sender's signals using some algorithms.  We apply our techniques in the approximately-best-responding model (in particular, the robustification idea) to this model to show a similar conclusion to the lower bound in Theorem~\ref{thm:BP-robustness}: with a learning receiver, the sender can always do almost at least as well as in the classic non-learning model.  And interestingly, the upper bound result in Theorem~\ref{thm:BP-robustness} does not apply here: we find an example where the sender can do much better in the learning model than in the classic model. 

\paragraph{\bf Model}
As in the model in Section~\ref{sec:model}, we have fixed state space $\Omega$, signal space $S$, action space $A$, prior $\prior$, sender utility $\su(\cdot, \cdot)$, and receiver utility $\ru(\cdot, \cdot)$.
The sender and the receiver interact for infinite rounds.\footnote{We consider an infinite horizon model only for ease of presentation.  Our qualitative results also apply to the finite horizon model where $1\le t \le T < \infty$ for a large $T$.}  
At each round $t = 1, 2, \ldots$, the following events happen in sequence: 
\begin{enumerate} 
    \item The sender chooses some signaling scheme $\pi^{(t)} : \Omega \to \Delta(S)$. 
    \item A fresh state of the world $\omega^{(t)}$ is drawn according to the fixed prior distribution $\prior$.
    \item The sender observes $\omega^{(t)}$ and sends a (randomized) signal $s^{(t)} \sim \pi^{(t)}(\omega^{(t)})$ to the receiver. 
    \item Upon receiving $s^{(t)}$, the receiver takes some (randomized) action $a^{(t)}\in A$. 
    \item The sender obtains utility $\su^{(t)} = \su(a^{(t)}, \omega^{(t)})$ and the receiver obtains utility $\ru^{(t)} = \ru(a^{(t)}, \omega^{(t)})$. 
\end{enumerate}

We discuss what information the sender and the receiver have when making decisions. 
For the sender, when choosing the signaling scheme $\pi^{(t)}$ at step (1), the sender knows the past states, signals, actions, and signaling schemes: $(\omega^{(\tau)}, s^{(\tau)}, a^{(\tau)}, \pi^{(\tau)})_{\tau = 1}^{t-1}$.
For the receiver, when choosing the action $a^{(t)}$ at step (4), besides the current signal $s^{(t)}$ the receiver knows the past signals, actions, and its own utilities: $(s^{(\tau)}, a^{(\tau)}, \ru^{(\tau)})_{\tau = 1}^{t-1}$.
Crucially, the receiver does \emph{not} know the current signaling scheme $\pi^{(t)}$; this assumption is different from Section~\ref{sec:model}.
In addition, there are two feedback models regarding whether the receiver is able to observe the state of the world $\omega^{(t)}$ \emph{in the end of} each round: 
With \emph{full-information feedback}, the receiver observes the state of the world and is able to compute the utility $\ru(a, \omega^{(t)})$ it would have obtained if it took action $a$ this round, for all $a\in A$.
With \emph{partial-information feedback}, the receiver does not observe the state of the world and only knows the utility $\ru^{(t)} = \ru(a^{(t)}, \omega^{(t)})$ of the actually taken action $a^{(t)}$.
Mathematically, with full-information feedback the receiver knows $s^{(t)}, (s^{(\tau)}, a^{(\tau)}, \omega^{(\tau)})_{\tau = 1}^{t-1}$ when choosing $a^{(t)}$; with partial-information feedback the receiver knows $s^{(t)}, (s^{(\tau)}, a^{(\tau)}, \ru^{(\tau)})_{\tau = 1}^{t-1}$.
Our results apply to both feedback models. 

The receiver's learning problem can be regarded as a contextual multi-armed bandit problem \citep{tyler_lu_contextual_2010}.  $A$ is the set of arms.  The signal $s^{(t)}$ from the sender serves as a context, which affects the utility of each arm $a \in A$.  The receiver picks an arm to pull based on the current context and the historical information about the utilities of each arm under different contexts, using some learning algorithms which we introduce below.

\paragraph{\bf Learning algorithms} 
In this work, we consider a general class of learning algorithms for the receiver that are based on the idea of approximately best-responding to the history.
We first define some notations.
Let $\hat{\mu}_s^{(t)}$ be the empirical distribution of the state of the world conditioning on signal $s$ in the first $t$ rounds: $\forall \omega \in \Omega$,
\begin{equation}
  \hat{\mu}_s^{(t)}(\omega) = \frac{\sum_{\tau=1}^t \mathbbm{1}[s^{(\tau)}=s, \omega^{(\tau)} = \omega]}{\sum_{\tau=1}^t \mathbbm{1}[s^{(\tau)} = s]}. 
\end{equation}
$\hat{\mu}_s^{(t)}$ is not well-defined if $\sum_{\tau=1}^t \mathbbm{1}[s^{(\tau)} = s] = 0$.
Let $\ru(a, \hat{\mu}_s^{(t)})$ denote the receiver's expected utility of taking action $a$ on distribution $\hat{\mu}_s^{(t)}$, which is also the receiver's empirical average utility of taking action $a$ in all the first $t$ rounds with signal $s$: 
\begin{align}
    \ru(a, \hat{\mu}_s^{(t)}) = \sum_{\omega\in\Omega} \hat{\mu}_s^{(t)}(\omega) \ru(a, \omega)
    & = \sum_{\omega\in\Omega} \frac{\sum_{\tau=1}^t \mathbbm{1}[s^{(\tau)}=s, \omega^{(\tau)} = \omega] \ru(a, \omega)}{\sum_{\tau=1}^t \mathbbm{1}[s^{(\tau)} = s]} \\
    & = \frac{\sum_{\tau=1}^t \mathbbm{1}[s^{(\tau)}=s] \ru(a, \omega^{(\tau)})}{\sum_{\tau=1}^t \mathbbm{1}[s^{(\tau)} = s]}. 
\end{align}
As in Section~\ref{sec:model}, an action $a$ is $\gamma$-best-responding to $\hat \mu_s^{(t)}$ if $\ru(a, \hat{\mu}_s^{(t)}) \ge \ru(a', \hat{\mu}_s^{(t)}) - \gamma$ for all $a'\in A$. 
\begin{definition}[Empirical $(\gamma_t, \delta_t)$-best-responding algorithm]
Let $(\gamma_t)_{t\ge 1}, (\delta_t)_{t\ge 1}$ be two sequences.
An algorithm is an \emph{empirical $(\gamma_t, \delta_t)$-best-responding algorithm} if the following holds: at each round $t\ge 1$, if the signal $s^{(t)}$ is $s$ and $\hat{\mu}_s^{(t-1)}$ is well-defined, then the algorithm chooses an action $a^{(t)}$ that is $\gamma_t$-best-responding to $\hat \mu_s^{(t-1)}$ with probability at least $1 - \delta_t$.  
\end{definition}
Our definition of empirical $(\gamma_t, \delta_t)$-best-responding algorithm is inspired by and generalizes the ``mean-based learning algorithms'' in \cite{braverman_selling_2018}.  We give some examples of empirical $(\gamma_t, \delta_t)$-best-responding algorithms. 
For the finite horizon version of our model (where $1\le t \le T$ for some $T< \infty$), the contextual versions of Exponential Weights, Multiplicative Weights Update, and Follow the Perturbed Leader are empirical $\big(\gamma_t = O(\frac{\sqrt{T} \log T}{t}), \delta_t = O(\frac{1}{\sqrt T})\big)$-best-responding algorithms with full-information feedback.\footnote{By ``the contextual version of a multi-armed bandit algorithm'', we mean running the multi-armed bandit algorithm for each context separately.  See \cite{braverman_selling_2018} for a formal definition.}
The contextual version of EXP-3 is an empirical $(\gamma_t = O\big(\frac{T^{3/4}\log T}{t}), \delta_t = O(\frac{1}{T^{1/4}})\big)$-best-responding algorithm with partial-information feedback. (See \cite{braverman_selling_2018} for the proofs.)
We also give an example of an infinite horizon empirical $(\gamma_t, \delta_t)$-best-responding algorithm, which is the contextual version of the infinite horizon Exponential Weights algorithm \citep{cesa-bianchi_prediction_2006}: 

\begin{example}\label{ex:infinite-empirical-best-responding}
Let $(\eta_t)_{t\ge 1}$ be a sequence with $\eta_t = \sqrt{\frac{\log|A|}{t}}$.  
The contextual infinite horizon Exponential Weight is the following: At each round $t$, given context/signal $s^{(t)} = s$, the algorithm chooses action $a^{(t)} = a$ with probability
\begin{equation}
    \frac{\exp\big(\eta_t \sum_{\tau \le t-1, s^{(\tau)} = s} \ru(a, \omega^{(\tau)}) \big)}{\sum_{a'\in A} \exp\big(\eta_t \sum_{\tau \le t-1, s^{(\tau)} = s} \ru(a', \omega^{(\tau)}) \big)}. 
\end{equation}
If every signal $s$ is sent with probability at least $p$ at every round, then this algorithm is an empirical $(\gamma_t, \delta_t)$-best-responding algorithm with $\gamma_t = O\Big( \frac{\log(|A|t)}{p \sqrt{t \log|A|}}  \Big)$ and $\delta_t = O\Big(\frac{1}{p \sqrt{t\log|A|}} \Big)$.
\end{example}
See Appendix~\ref{app:ex:infinite-empirical-best-responding} for a proof of this example. 
We remark that the choice of $\eta_t = \sqrt{\frac{\log|A|}{t}}$ guarantees the algorithm to be no-regret (Theorem 2.3 in \cite{cesa-bianchi_prediction_2006}).

\subsection{The sender can do almost as well as the classic model}
\begin{theorem}\label{thm:learning}
Assume Assumption~\ref{assump:unique_optimal_action_strong} and suppose $\mu_{\min} := \min_{\omega\in\Omega}\mu(\omega) > 0$.
Suppose the receiver uses an empirical $(\gamma_t, \delta_t)$-best-responding algorithm, with $\gamma_t\to 0$ and $\delta_t\to 0$ as $t\to\infty$.
For any constant $C>0$, there exists a signaling scheme $\pi$ such that, by using $\pi$ for all rounds, the sender can obtain an expected average utility $\E[\frac{1}{t}\sum_{\tau=1}^t \su(a^{(\tau)}, \omega^{(\tau)})]$ of at least $\OPT^\BP(\prior) - C$ as $t\to\infty$.
\end{theorem}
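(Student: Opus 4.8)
The plan is to reuse the robustification construction from the proof of Lemma~\ref{lem:lower_bound} and then argue that an empirical best-responder eventually behaves like the obedient strategy. Let $\pi^*$ be an optimal direct-revelation scheme in the classic model, for which $\adv_{\pi^*}(s)\ge 0$, and fix a constant $\alpha$ with $0<\alpha\le C$. Let $\pi$ be the $\alpha$-robustification of $\pi^*$ (Definition~\ref{def:robustification}); the sender commits to this $\pi$ in every round. Exactly as in Section~\ref{sec:proof_lower_bound}, Lemma~\ref{lem:robustification} yields a strictly positive advantage $\adv_{\pi}(s)\ge\alpha\,\prior(\Omega_s)\Delta/\pi(s)\ge\alpha\,\prior_{\min}\Delta=:\beta>0$ for every signal $s$, together with the utility guarantee $\esu_\prior(\pi,\rho^{\ob})\ge\esu_\prior(\pi^*,\rho^{\ob})-\alpha=\OPT^{\BP}(\prior)-\alpha\ge\OPT^{\BP}(\prior)-C$. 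Crucially, the second part of Assumption~\ref{assump:unique_optimal_action_strong} forces every signal to recur, since $\pi(s)=(1-\alpha)\pi^*(s)+\alpha\,\prior(\Omega_s)\ge\alpha\,\prior_{\min}>0$.

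Next I would control the receiver's empirical posteriors. Because the sender uses the fixed scheme $\pi$ and a fresh $\omega^{(\tau)}\sim\prior$ is drawn each round, the pairs $(\omega^{(\tau)},s^{(\tau)})$ are i.i.d.\ draws from the joint distribution $\pi(\omega,s)$, and the receiver's actions never feed back into state/signal generation. By the strong law of large numbers applied cellwise, the empirical joint distribution converges almost surely to $\pi(\omega,s)$; since each marginal $\pi(s)$ is bounded away from $0$, the empirical posterior $\hat\prior_s^{(t)}$ converges a.s.\ to $\prior_s=\pi(\cdot\mid s)$, whence $\max_s\dTV(\hat\prior_s^{(t)},\prior_s)\to 0$ a.s. Defining the good event $G_\tau=\{\hat\prior_s^{(\tau-1)}\text{ is well-defined and }\dTV(\hat\prior_s^{(\tau-1)},\prior_s)<\beta/4\text{ for all }s\}$, almost-sure convergence upgrades to $\Pr[G_\tau]\to 1$.

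Then I would combine the positive advantage with the empirical-best-response property. Because utilities lie in $[0,1]$, for every $a\ne s$ we have $\ru(s,\hat\prior_s^{(\tau-1)})-\ru(a,\hat\prior_s^{(\tau-1)})\ge\adv_{\pi}(s)-2\dTV(\hat\prior_s^{(\tau-1)},\prior_s)$, so on $G_\tau$ the recommended action $s$ beats every alternative by at least $\beta/2$ under $\hat\prior_s^{(\tau-1)}$. Once $\tau$ is large enough that $\gamma_\tau<\beta/2$ (possible since $\gamma_\tau\to 0$), the recommendation is the \emph{unique} $\gamma_\tau$-best response to $\hat\prior_s^{(\tau-1)}$, so an empirical $(\gamma_\tau,\delta_\tau)$-best-responding algorithm plays $a^{(\tau)}=s^{(\tau)}$ with probability at least $1-\delta_\tau$; hence $\Pr[a^{(\tau)}\ne s^{(\tau)}]\le\delta_\tau+\Pr[G_\tau^c]\to 0$. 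Writing $\E[\su(a^{(\tau)},\omega^{(\tau)})]=\esu_\prior(\pi,\rho^{\ob})+\E[\su(a^{(\tau)},\omega^{(\tau)})-\su(s^{(\tau)},\omega^{(\tau)})]$ and bounding the second term by $\Pr[a^{(\tau)}\ne s^{(\tau)}]$ (its integrand is nonzero only on disobedient rounds and lies in $[-1,1]$), the per-round expectation tends to $\esu_\prior(\pi,\rho^{\ob})\ge\OPT^{\BP}(\prior)-C$. The expected average utility is the Ces\`aro mean of this convergent sequence and therefore shares the same limit, which proves the claim.

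The main obstacle is the concentration step: one must argue that the empirical posteriors converge uniformly over signals despite the random, data-dependent number of observations per signal, and turn this into $\Pr[G_\tau^c]\to 0$. This is precisely where robustification earns its keep---beyond merely fixing the sender's utility, it both forces every signal to recur (so each $\hat\prior_s^{(t)}$ rests on a growing sample) and opens a fixed positive advantage margin $\beta$ that eventually dominates the vanishing tolerance $\gamma_\tau$, letting the empirical learner converge to obedient play.
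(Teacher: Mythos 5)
Your proof is correct and follows essentially the same route as the paper: robustify the classic optimal direct-revelation scheme so every recommendation carries a fixed positive advantage, show the empirical posteriors concentrate (the paper via explicit Hoeffding/Chernoff bounds in Lemma~\ref{lem:concentration}, you via the strong law of large numbers, which suffices for the purely asymptotic claim), conclude that the recommended action is eventually the unique $\gamma_\tau$-best response so the learner obeys with probability $1-\delta_\tau-o(1)$, and finish with a Ces\`aro argument. The only cosmetic difference is that the paper transfers the receiver's approximate best response from $\hat\mu_s^{(\tau-1)}$ to $\mu_s$ while you transfer the advantage from $\mu_s$ to $\hat\mu_s^{(\tau-1)}$; both work.
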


We remark that, in order to find the signaling scheme $\pi$ above, the sender does not need to know the exact algorithm of the receiver; knowing an upper bound on the $(\gamma_t, \delta_t)$ sequence suffices.

\begin{proof}[Proof of Theorem~\ref{thm:learning}]
The proof will use the following concentration bound, whose proof is in Appendix~\ref{app:lem:concentration}.
\begin{lemma}\label{lem:concentration}
Suppose the sender uses a fixed signaling scheme $\pi$ for all rounds. 
Suppose $\sqrt{\frac{3\log(2|S|t)}{\pi(s)t}} < \frac{1}{2}$ for all $s\in S$ with $\pi(s) > 0$.  Then, with probability at least $1 - \frac{2}{t}$, we have: for any $s\in S$ with $\pi(s) > 0$, any $a\in A$, $|\ru(a, \hat \mu_s^{(t)}) - \ru(a, \mu_s)| \le 2\sqrt{\frac{3\log(2|S|t)}{\pi(s)t}} + \frac{2}{\pi(s)}\sqrt{\frac{\log(2|S||A|t)}{2t}}$.  
\end{lemma}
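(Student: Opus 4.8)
The plan is to exploit the fact that, because the sender commits to a single fixed scheme $\pi$ in every round, the pairs $(\omega^{(\tau)}, s^{(\tau)})$ are i.i.d.\ across $\tau = 1,\dots,t$, and then to control the empirical utility $\ru(a, \hat\mu_s^{(t)})$ by concentrating its numerator and its denominator \emph{separately}. Writing $N_s \coloneqq \sum_{\tau=1}^t \mathbbm{1}[s^{(\tau)} = s]$ for the (random) number of rounds in which signal $s$ is sent, and $X_{s,a} \coloneqq \sum_{\tau=1}^t \mathbbm{1}[s^{(\tau)} = s]\,\ru(a, \omega^{(\tau)})$ for the accumulated signal-conditioned utility, we have the exact identity $\ru(a, \hat\mu_s^{(t)}) = X_{s,a}/N_s$. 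A direct computation gives $\E[N_s] = \pi(s) t$ and $\E[X_{s,a}] = \pi(s) t\,\ru(a, \mu_s)$, so the target $\ru(a, \mu_s)$ is precisely the ratio of the two expectations, and each term of $X_{s,a}$ lies in $[0,1]$.

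The two concentration steps are as follows. First, a multiplicative Chernoff bound on $N_s$ (a sum of i.i.d.\ $\mathrm{Bernoulli}(\pi(s))$ indicators) gives $\Pr[\,|N_s - \pi(s)t| \ge \eps_s\,\pi(s)t\,] \le 2\exp(-\pi(s) t\,\eps_s^2/3)$; choosing $\eps_s = \sqrt{3\log(2|S|t)/(\pi(s)t)}$ makes the right-hand side $\tfrac{1}{|S|t}$, and the hypothesis $\eps_s < \tfrac12$ then forces $N_s \ge \tfrac12\pi(s)t$ on the good event. Second, Hoeffding's inequality on $X_{s,a}$ (a sum of $t$ i.i.d.\ terms in $[0,1]$) gives $\Pr[\,|X_{s,a} - \pi(s)t\,\ru(a,\mu_s)| \ge t\lambda\,] \le 2\exp(-2t\lambda^2)$; choosing $\lambda = \sqrt{\log(2|S||A|t)/(2t)}$ makes the right-hand side $\tfrac{1}{|S||A|t}$. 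A union bound over the $|S|$ signals for the first step and over the $|S||A|$ signal-action pairs for the second yields a combined failure probability of at most $\tfrac1t + \tfrac1t = \tfrac2t$.

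On the intersection of the two good events I would then combine the estimates through the ratio decomposition
\begin{equation*}
\Big| \frac{X_{s,a}}{N_s} - \ru(a,\mu_s) \Big| \;\le\; \frac{|X_{s,a} - \pi(s) t\,\ru(a,\mu_s)|}{N_s} \;+\; \frac{\ru(a,\mu_s)\,|N_s - \pi(s) t|}{N_s},
\end{equation*}
bounding $\ru(a,\mu_s)\le 1$ in the second term and substituting $N_s \ge \tfrac12\pi(s)t$ into both denominators together with the numerator bounds $t\lambda$ and $\eps_s\pi(s)t$. The first term then becomes $\frac{2}{\pi(s)}\sqrt{\log(2|S||A|t)/(2t)}$ and the second becomes $2\sqrt{3\log(2|S|t)/(\pi(s)t)}$, which is exactly the claimed bound.

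The main obstacle is the randomness of the denominator $N_s$: the quantity of interest is a ratio of two correlated random sums rather than a single sample average, so a single Hoeffding estimate does not suffice. Handling this is precisely why the hypothesis $\sqrt{3\log(2|S|t)/(\pi(s)t)} < \tfrac12$ is imposed — it is exactly the condition that keeps the deviation parameter $\eps_s$ below $\tfrac12$ and hence forces $N_s$ to stay bounded away from $0$ (at least half its mean), which is what legitimizes the division. The remaining work is routine bookkeeping: verifying $\E[X_{s,a}] = \pi(s)t\,\ru(a,\mu_s)$ by conditioning on $\omega^{(\tau)}$, and matching the constants so that the two union bounds sum to $\tfrac{2}{t}$.
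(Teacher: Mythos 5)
Your proposal is correct and follows essentially the same route as the paper's proof: a multiplicative Chernoff bound on the signal counts $N_s = t\,\hat\pi^{(t)}(s)$ and a Hoeffding bound on $X_{s,a} = t\,\hat\pi^{(t)}(s)\,\ru(a,\hat\mu_s^{(t)})$, with the same parameter choices and union bounds summing to $\tfrac{2}{t}$, and the hypothesis $\eps_s<\tfrac12$ used to keep the random denominator away from zero. The only difference is cosmetic: you combine the two events via the add-and-subtract ratio decomposition with $N_s\ge\tfrac12\pi(s)t$, while the paper divides the Hoeffding estimate by $\hat\pi^{(t)}(s)$ and bounds $\pi(s)/\hat\pi^{(t)}(s)$ from both sides -- both yield the identical final bound.
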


Similar to the proof of Lemma~\ref{lem:lower_bound}, let $\pi^*$ be an optimal signaling scheme for the sender for prior $\prior$ in the classic setting, which can be assumed to be a direct-revelation scheme \citep{kamenica_bayesian_2011}, so $S = A$.
Let $\alpha = \frac{C}{2} > 0$ and let $\pi$ be the $\alpha$-robustification of $\pi^*$ (Definition~\ref{def:robustification}).
Suppose the sender uses the scheme $\pi$ for all rounds. 
Since $\gamma_t \to 0$ and $\delta_t\to 0$, for large enough $t$ we have 
\begin{equation}\label{eq:large-t-1}
    \alpha > \tfrac{\pi(s)}{\mu(\Omega_s)\Delta}\gamma_t + \tfrac{4}{\mu(\Omega_s)\Delta}\sqrt{\tfrac{\pi(s) \cdot 3\log(2|S|(t-1))}{t-1}} + \tfrac{4}{\mu(\Omega_s)\Delta} \sqrt{\tfrac{\log(2|S||A|(t-1))}{2(t-1)}} ~\; \text{ for every $s\in S$} 
\end{equation}
and
\begin{equation}\label{eq:large-t-2}
    \alpha + \delta_t + \tfrac{2}{t-1} < C. 
\end{equation}
We consider each such $t$.  By Lemma~\ref{lem:concentration}, with probability at least $1-\frac{2}{t-1}$, for any $s$ and $a$ we have
\begin{equation}\label{eq:c}
 |\ru(a, \hat \mu_s^{(t-1)}) - \ru(a, \mu_s)| \,\le\, 2\sqrt{\tfrac{3\log(2|S|(t-1))}{\pi(s)(t-1)}} + \tfrac{2}{\pi(s)}\sqrt{\tfrac{\log(2|S||A|(t-1))}{2(t-1)}} \, =: \, c.
\end{equation}

\begin{claim}\label{claim:learning-approxiamtely-best-response}
At round $t$, for any possible signal $s^{(t)} = s \in S$, the receiver takes a $(\gamma_t + 2c)$-best-responding action $a^{(t)}$ with respect to $\mu_s$ with probability at least $1-\delta_t-\frac{2}{t-1}$.
\end{claim}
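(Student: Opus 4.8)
The plan is to prove Claim~\ref{claim:learning-approxiamtely-best-response} by coupling the receiver's \emph{empirical} $(\gamma_t,\delta_t)$-best-responding guarantee (which is phrased in terms of the empirical posterior $\hat{\mu}_s^{(t-1)}$) with the concentration bound \eqref{eq:c} (which compares $\hat{\mu}_s^{(t-1)}$ to the true posterior $\mu_s$ induced by the fixed scheme $\pi$). I fix the round $t$ and a target value $s$, and condition on $s^{(t)}=s$ throughout; the goal is to bound the probability that $a^{(t)}$ fails to be $(\gamma_t+2c)$-best-responding to $\mu_s$ by $\delta_t+\tfrac{2}{t-1}$.

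First I would set up the concentration event. Let $E$ be the event that \eqref{eq:c} holds at round $t-1$, i.e. $|\ru(a,\hat{\mu}_{s'}^{(t-1)})-\ru(a,\mu_{s'})|\le c$ simultaneously for every $s'\in S$ with $\pi(s')>0$ and every $a\in A$; by Lemma~\ref{lem:concentration}, $\Pr[E]\ge 1-\tfrac{2}{t-1}$. Two bookkeeping facts make the conditioning clean. Since the sender uses the \emph{same} scheme $\pi$ every round and the states are drawn i.i.d.\ from $\mu$, the pair $(\omega^{(t)},s^{(t)})$ is independent of the history $(s^{(\tau)},a^{(\tau)},\omega^{(\tau)})_{\tau<t}$; hence $E$ is independent of $\{s^{(t)}=s\}$ and $\Pr[E\mid s^{(t)}=s]=\Pr[E]\ge 1-\tfrac{2}{t-1}$. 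Moreover, on $E$ the empirical posterior $\hat{\mu}_s^{(t-1)}$ is well-defined for every signal with positive probability (implicit in the lemma, whose high-probability event forces each such signal's count to be positive), and here $\pi(s)\ge \alpha\,\mu(\Omega_s)\ge \alpha\,\mu_{\min}>0$ for every $s$ because $\pi$ is the $\alpha$-robustification of $\pi^*$ and $\Omega_s\neq\emptyset$ by Assumption~\ref{assump:unique_optimal_action_strong}. Thus on $E$ the algorithm's empirical guarantee is in force.

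Next I would carry out the bridging step, a two-sided use of \eqref{eq:c}. Suppose $E$ holds and $a^{(t)}$ is $\gamma_t$-best-responding to $\hat{\mu}_s^{(t-1)}$, i.e.\ $\ru(a^{(t)},\hat{\mu}_s^{(t-1)})\ge \ru(a',\hat{\mu}_s^{(t-1)})-\gamma_t$ for all $a'$. Then for every $a'\in A$,
\begin{align*}
\ru(a^{(t)},\mu_s)\;\ge\;\ru(a^{(t)},\hat{\mu}_s^{(t-1)})-c\;\ge\;\ru(a',\hat{\mu}_s^{(t-1)})-\gamma_t-c\;\ge\;\ru(a',\mu_s)-\gamma_t-2c,
\end{align*}
so $a^{(t)}$ is $(\gamma_t+2c)$-best-responding to $\mu_s$. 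Contrapositively, on $E$ the bad event $B:=\{a^{(t)}\text{ is not }(\gamma_t+2c)\text{-best-responding to }\mu_s\}$ forces $a^{(t)}$ to have failed to be $\gamma_t$-best-responding to $\hat{\mu}_s^{(t-1)}$.

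Finally I would assemble the failure probability. For each history $h\in E$ (so $\hat{\mu}_s^{(t-1)}$ is well-defined), the empirical guarantee gives that $a^{(t)}$ is $\gamma_t$-best-responding to $\hat{\mu}_s^{(t-1)}$ with probability at least $1-\delta_t$ conditional on $s^{(t)}=s$ and $h$; averaging over $h\in E$ yields $\Pr[\,a^{(t)}\text{ not }\gamma_t\text{-BR to }\hat{\mu}_s^{(t-1)}\mid E,\,s^{(t)}=s\,]\le\delta_t$. Combining this with the contrapositive and $\Pr[E^c\mid s^{(t)}=s]\le\tfrac{2}{t-1}$,
\begin{align*}
\Pr[B\mid s^{(t)}=s]\;\le\;\Pr[E^c\mid s^{(t)}=s]+\Pr[B\cap E\mid s^{(t)}=s]\;\le\;\tfrac{2}{t-1}+\delta_t,
\end{align*}
and taking complements gives the claim. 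I expect the main obstacle to be exactly this conditioning bookkeeping: the empirical guarantee is a statement conditional on the realized history, while $E$ is itself a history-event, so I must argue that the $1-\delta_t$ bound survives restriction both to $E$ and to $\{s^{(t)}=s\}$. The independence of $(\omega^{(t)},s^{(t)})$ from the past is what secures this (it both preserves $\Pr[E]$ under conditioning on $s^{(t)}=s$ and lets the per-history empirical guarantee be averaged over $E$), and establishing that $\hat{\mu}_s^{(t-1)}$ is well-defined on $E$ is the side-condition that lets the empirical guarantee be invoked at all.
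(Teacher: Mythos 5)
Your proposal is correct and follows essentially the same route as the paper's proof: condition on the concentration event from Lemma~\ref{lem:concentration}, use the empirical $(\gamma_t,\delta_t)$-best-responding guarantee, and chain the two applications of \eqref{eq:c} through the $\gamma_t$-optimality on $\hat\mu_s^{(t-1)}$ to get $(\gamma_t+2c)$-optimality on $\mu_s$, finishing with a union bound. The only difference is that you make the conditioning bookkeeping (independence of $(\omega^{(t)},s^{(t)})$ from the history, well-definedness of $\hat\mu_s^{(t-1)}$ on the good event) explicit where the paper leaves it implicit.
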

We prove this claim.  Let $a^*$, $\hat a^*$ be the best-responding action for distribution $\mu_s$, $\hat \mu_s^{(t-1)}$, respectively.  By the assumption that the receiver uses an empirical $(\gamma_t, \delta_t)$-best-responding algorithm, with probability at least $1-\delta_t$ the action $a^{(t)}$ taken by the receiver is $\gamma_t$-best-responding to $\hat \mu_s^{(t-1)}$.  The expected utility of $a^{(t)}$ on the true distribution $\mu_s$ satisfies 
\begin{align*}
    \ru(a^{(t)}, \mu_s) & ~~~ \stackrel{\eqref{eq:c}}{\ge} ~~~ \ru(a^{(t)}, \hat \mu_s^{(t-1)}) - c  ~~ \stackrel{\text{$a^{(t)}$ is $\gamma_t$-best-responding to $\hat \mu_s^{(t-1)}$}}{\ge} ~ \ru(\hat a^*, \hat \mu_s^{(t-1)}) - \gamma_t - c  \\
    & \quad\quad  \stackrel{\text{$\hat a^*$ is best-responding to $\hat \mu_s^{(t-1)}$}}{\ge} ~ \ru(a^*, \hat \mu_s^{(t-1)}) - \gamma_t - c  \quad \stackrel{\eqref{eq:c}}{\ge} \quad \ru(a^*, \mu_s) - \gamma_t - 2c. 
\end{align*}
So, the taken action $a^{(t)}$ is $(\gamma_t + 2c)$-best-responding to $\mu_s$.  This happens with probability at least $1 - \delta_t - \frac{2}{t-1}$.  The claim is thus proved. 

According to Lemma~\ref{lem:robustification}, the signaling scheme $\pi$, as the $\alpha$-robustification of $\pi^*$, has advantage
\begin{align*}
    \adv_\pi(s) \,\ge\, (1-\alpha) \tfrac{\pi^*(s)}{\pi(s)}\adv_{\pi^*}(s) + \alpha \tfrac{\mu(\Omega_s)}{\pi(s)} \Delta  \, \ge \, \alpha \tfrac{\mu(\Omega_s)}{\pi(s)} \Delta
\end{align*}
for every $s\in S$ (the last inequality is because $\adv_{\pi^*}(s) \ge 0$).  Given \eqref{eq:large-t-1}, we have 
\begin{align*}
    \adv_\pi(s) \, \ge \, \alpha \tfrac{\mu(\Omega_s)}{\pi(s)} \Delta \, > \, \gamma_t + 4 \sqrt{\tfrac{3\log(2|S|(t-1))}{\pi(s)(t-1)}} + \tfrac{4}{\pi(s)}\sqrt{\tfrac{\log(2|S||A|(t-1))}{2(t-1)}} \, = \, \gamma_t + 2c. 
\end{align*}
This means that the recommended action $s$ is $(\gamma_t+2c)$-better than any other action and is hence the only $(\gamma_t+2c)$-best-responding action with respect to $\mu_s$.  Since Claim \ref{claim:learning-approxiamtely-best-response} shows that the receiver takes a $(\gamma_t + 2c)$-best-responding action with respect to $\mu_s$ at round $t$ (with probability at least $1-\delta_t-\frac{2}{t-1}$), this action must be $s$.  In other words, the receiver uses the obedient strategy at round $t$ with probability at least $1 - \delta_t - \frac{2}{t-1}$.  So, the sender's expected utility at round $t$ is at least
\begin{align*}
    \E[\su(a^{(t)}, \omega^{(t)})] \ge \esu_\prior(\pi, \rho^\ob) - \delta_t - \tfrac{2}{t-1} & \stackrel{\text{(Lemma~\ref{lem:robustification})}}{\ge} \esu_\prior(\pi^*, \rho^\ob) - \alpha - \delta_t - \tfrac{2}{t-1} \stackrel{\eqref{eq:large-t-2}}{>} \OPT^\BP(\prior) - C.  
\end{align*}
So, we have $\lim_{t\to\infty} \E[\su(a^{(t)}, \omega^{(t)})] \ge \OPT^\BP(\prior) - C$ in the limit. 
Taking the average, we obtain
\[ \lim_{t\to\infty} \E\Big[\frac{1}{t}\sum_{\tau=1}^t \su(a^{(\tau)}, \omega^{(\tau)})\Big] \,=\, \lim_{t\to\infty} \frac{1}{t}\sum_{\tau=1}^t \E[\su(a^{(\tau)}, \omega^{(\tau)})] \,\ge\, \OPT^\BP(\prior) - C, \]
which proves the theorem. 
\end{proof} 

\subsection{An example where the sender can do better than the classic model}
In Theorem~\ref{thm:BP-robustness} we showed that the sender cannot do much better than the classic Bayesian persuasion benchmark in the approximately-best-responding model.  Interestingly, this conclusion does not hold in the learning receiver model.  
The following example shows that, with a receiver using an empirical best-responding learning algorithm, it is possible for the sender to do much better than the classic Bayesian persuasion benchmark.

\begin{example}\label{ex:better-than-classic-model}
Suppose there are two states of the world: Good (G) and Bad (B); two actions: $a$ and $b$.  The prior is $\prior(G) = \prior(B) = 1/2$.  The utility matrices of the sender and the receiver are as follows:
\begin{center}
\begin{tabular}{|l|l|l|lll|l|l|l|}
\cline{1-3} \cline{7-9}
sender & G  & B  &  & \hspace{1em} &  & receiver & G & B \\ \cline{1-3} \cline{7-9} 
$a$ & $0$         & $1$         &  &  &  & $a$ & $1$        & $0$        \\ \cline{1-3} \cline{7-9} 
$b$ & $1$ & $0$ &  &  &  & $b$ & $0$        & $1$        \\ \cline{1-3} \cline{7-9}
\end{tabular}
\end{center}
In words, the sender wants the receiver's action to mismatch with the state while the receiver wants to match with the state.  (This example satisfies Assumption~\ref{assump:unique_optimal_action_strong}.)

The receiver uses an empirical best-responding algorithm, with $\gamma_t = \delta_t=0$, breaking ties randomly.  Formally, at any round $t$, given signal $s^{(t)} = s$, if the empirical probability of G is $\hat \mu_s^{(t-1)}(G) > 1/2$, the receiver takes action $a$; if $\hat \mu_s^{(t-1)}(G) < 1/2$, takes $b$; otherwise, randomly chooses $a$ or $b$. 

The sender uses the following time-varying signaling scheme: Continuously send signals $s_2$ until a Good state occurs (say at round $t_1$), send $s_1$ at that round.  Then, continuously send signals $s_2$ until a Bad state occurs (say at round $t_2$), send $s_1$ at that round. Keep alternating in this way: continuously send $s_2$ until a Good state occurs, send $s_1$; continuously send $s_2$ until a Bad state occurs, send $s_1$; .... 

We claim that, by using the above signaling scheme, the sender can obtain an expected average utility of $\frac{5}{8} = 0.625$ as $t\to\infty$.  The sender's optimal signaling scheme in the classic non-learning model is to reveal no information, where the receiver responds by taking either action $a$ or action $b$; the sender's expected utility in this scheme is $0.5$, smaller than $0.625$ by a constant margin. 
\end{example}
\begin{proof}
An example sequence of realized states, signals, and actions is as follows, where ``?'' denotes a randomly chosen action, and bold symbols mark the rounds where signal $s_1$ is sent. 
\begin{center}
\begin{tabular}{cccccccccccc}
    $B$   &   $\mathbf{G}$  & $G$   & $G$   & $\mathbf{B}$   & $B$   & $\mathbf{G}$   & $\mathbf{B}$   & $\mathbf{G}$   & $\mathbf{B}$   & $B$ &  ... \\
    $s_2$ & $\mathbf{s_1}$ & $s_2$ & $s_2$ & $\mathbf{s_1}$ & $s_2$ & $\mathbf{s_1}$ & $\mathbf{s_1}$ & $\mathbf{s_1}$ & $\mathbf{s_1}$ & $s_2$ & ... \\
    ?     &   {\bf ?}     & $b$   & ?     & $\mathbf{a}$   & $a$   & {\bf ?}        & $\mathbf{a}$   & {\bf ?}          & $\mathbf{a}$   & ?   & ...
\end{tabular}
\end{center}

We will analyze the sequence of rounds where signal $s_1$ is sent and the sequence of rounds where signal $s_2$ is sent, respectively.  Denote the two sequences by $Seq_{s_1}$ and $Seq_{s_2}$.  We will prove the following claims: 
\begin{claim}
As $t \to \infty$, 
\begin{enumerate}
    \item The length $L_{s_1}$ of the sequence $Seq_{s_1}$ is roughly $\frac{t}{2}$.
    \item The sender's total expected utility in $Seq_{s_1}$ is roughly $\frac{3}{4}L_{s_1}$.  
    \item The length $L_{s_2}$ of the sequence $Seq_{s_2}$ is roughly $\frac{t}{2}$.
    \item The sender's total expected utility in $Seq_{s_2}$ is roughly $\frac{1}{2}L_{s_2}$. 
\end{enumerate}
\end{claim}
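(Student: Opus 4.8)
The plan is to treat the sender's scheme as a renewal process and analyze the two subsequences $Seq_{s_1}$ and $Seq_{s_2}$ separately, exploiting that $s_1$ is sent exactly at the rounds where the realized state matches an internal ``target'' that flips between $G$ and $B$ after each $s_1$. Under this reinterpretation the rounds split into i.i.d.\ \emph{segments}, each a (possibly empty) run of $s_2$ rounds followed by a single $s_1$ round; since every fresh state matches the current target with probability $\tfrac12$, a segment has geometric length with mean $2$. By the renewal theorem (equivalently, the strong law applied to segment lengths), the number of completed segments by round $t$ is $\approx t/2$, and each segment contains exactly one $s_1$ round, giving items (1) and (3): $L_{s_1}\approx L_{s_2}\approx t/2$, where ``$\approx$'' hides the $o(t)$ boundary term from the last unfinished segment.

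For item (2) the key observation is that the states at the $s_1$ rounds \emph{alternate deterministically} $G,B,G,B,\dots$ (we first wait for $G$, then for $B$, and so on), so the empirical posterior $\hat{\mu}_{s_1}^{(t-1)}(G)$ seen at the $k$-th $s_1$ round carries \emph{no} sampling noise: it equals exactly $\tfrac12$ when $k-1$ is even and is strictly above $\tfrac12$ when $k-1$ is odd. I would then read the receiver's behavior off directly: for even $k$ the receiver plays $a$ at state $B$, earning the sender $\su(a,B)=1$; for odd $k\ge 3$ it faces a tie, randomizes, and at state $G$ earns $\tfrac12\su(a,G)+\tfrac12\su(b,G)=\tfrac12$. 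Averaging over the equally frequent even/odd indices yields $\tfrac12\cdot 1+\tfrac12\cdot\tfrac12=\tfrac34$ per $s_1$ round, which is item (2).

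Item (4) is the main obstacle, because the $s_2$ states do \emph{not} alternate: during a ``wait for $G$'' phase every $s_2$ round has state $B$, and during a ``wait for $B$'' phase every $s_2$ round has state $G$, so $Seq_{s_2}$ is an alternating concatenation of i.i.d.\ geometric blocks of $B$'s and $G$'s. First I would show $\hat{\mu}_{s_2}^{(t-1)}(G)\to\tfrac12$: the numbers of $B$-blocks and $G$-blocks differ by at most one and both block types have the same mean length, so the law of large numbers pins the fraction of $G$'s at $\tfrac12$. The delicate point is the receiver's \emph{action}, governed by the sign of the deviation $D_{n-1}:=\sum_{i<n}(X_i-\tfrac12)$ with $X_i=\mathbb{I}[\text{$i$-th $s_2$ state}=G]$; I must rule out a systematic correlation between this sign and the current block type, which determines the current state and hence which action earns the sender a utility of $1$. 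The plan is to split $D_{n-1}$ into the contribution of the current block, of age $O(1)$ and thus magnitude $O(1)$, and the contribution of the past, a mean-zero random walk of size $\Theta(\sqrt{n})$ by the central limit theorem; since $\sqrt{n}\gg 1$, the sign of $D_{n-1}$ matches that of the bulk term outside an event of probability $O(1/\sqrt n)\to 0$.

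Combining this with the $G\leftrightarrow B$, $a\leftrightarrow b$ symmetry of both the instance and the scheme, the limiting distribution of the bulk term is symmetric about $0$ and asymptotically independent of the parity (i.e.\ the type) of the current block, so the receiver plays each action with probability $\to\tfrac12$ regardless of the current state. Since the sender earns $1$ exactly when the action mismatches the state, this gives expected utility $\tfrac12$ per $s_2$ round, which is item (4); the vanishing tie events ($D_{n-1}=0$) also contribute exactly $\tfrac12$ and are absorbed into the $o(L_{s_2})$ error. I expect essentially all the genuine care to lie in this CLT-plus-symmetry argument decoupling $\mathrm{sign}(D_{n-1})$ from the current block type; the renewal and deterministic-alternation arguments behind items (1)--(3) are routine by comparison.
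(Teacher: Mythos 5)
Your proposal is correct at the same level of rigor as the paper (which itself only sketches item (4)), but it reaches the conclusion by a noticeably different route in two places. For items (1) and (3) you use a renewal decomposition into i.i.d.\ geometric segments of mean length $2$, each containing exactly one $s_1$ round; the paper instead writes the joint (state, signal) process as a four-node Markov chain and reads $L_{s_1}\approx L_{s_2}\approx t/2$ off its uniform stationary distribution. These are equivalent, and your item (2) is essentially identical to the paper's (deterministic alternation $G,B,G,B,\dots$ at the $s_1$ rounds, tie at odd indices, strict preference for $a$ at even indices, average $\tfrac34$). The real divergence is item (4). You decompose the empirical deviation $D_{n-1}$ into an $O(1)$ current-block term plus a $\Theta(\sqrt{n})$ mean-zero bulk, and then argue via CLT and the $G\leftrightarrow B$, $a\leftrightarrow b$ symmetry that the sign of the bulk is asymptotically independent of the current block type. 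The paper instead runs the decoupling in the time direction: by anti-concentration the two counts differ by $\Omega(\sqrt{t})$, so the receiver's action at round $t$ is already determined at round $t-\Omega(\sqrt{t})/2$, and the Markov chain mixes over that window, making the state at round $t$ approximately uniform conditional on the action. Both arguments isolate the same independence statement, but be aware that in your version the symmetry alone only yields $\Pr[\mathrm{sign}=+,\,\text{state}=B]=\Pr[\mathrm{sign}=-,\,\text{state}=G]$, which does not by itself give the per-round value $\tfrac12$; you genuinely need the asserted asymptotic independence of the bulk sign from the current block type, and that is exactly the step the paper's mixing-time argument supplies more directly. So your route is viable but concentrates all the remaining work in that one independence claim, whereas the paper's time-shift-plus-mixing argument gives it for free once the $\Omega(\sqrt{t})$ anti-concentration bound is in place.
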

The above claims imply that the sender's expected average utility in all the $t$ rounds is roughly: 
\begin{equation}
    \frac{1}{t}\big( \frac{3}{4}L_{s_1} + \frac{1}{2}L_{s_2}\big) \approx \frac{1}{t}\big( \frac{3}{8}t + \frac{1}{4}t \big) =  \frac{8}{5}.  
\end{equation}
This proves the example.

It remains to prove the claims. 
We first consider the sequence $Seq_{s_1}$ and prove Claim (2).  The states in this sequence are alternating: $GBGBGB...$.  When the state is $G$, the numbers of $G$ states and $B$ states in previous rounds are equal, namely, the empirical probability $\hat \mu_{s_1}^{(t-1)}(G) = \frac{1}{2}$, so the receiver will choose an action randomly, giving the sender an expected utility of $\frac{1}{2}$.   When the state is $B$, the number of $G$ states is greater than the number of $B$ states in previous rounds, so the receiver will take action $a$, giving the sender a utility of $1$.  Since half of the sequence is $G$ states and half is $B$ states, the total expected utility of the sender is $\frac{L_{s_1}}{2} \cdot \frac{1}{2} + \frac{L_{s_1}}{2}\cdot 1= \frac{3}{4}L_{s_1}$.

We then prove Claims (1) and (3), namely, the lengths $L_{s_1} \approx L_{s_2} \approx \frac{t}{2}$.  We note that the whole sequence of states and signals can be generated according to the Markov chain in Figure \ref{fig:markov-1}, where a node represents the state and the signal sent at that state, an edge represents the next state and has transition probability $1/2$, and the starting point is the bottom right $(B, s_2)$.  This Markov chain has a unique stationary distribution, where all the four nodes have an equal probability of $1/4$.  By the fundamental theorem of Markov chains, the long-term frequency of visiting nodes $(G, s_1)$ and $(B, s_1)$ is equal to the stationary probability of the two nodes, which is $\frac{1}{4} + \frac{1}{4} = \frac{1}{2}$.  So, as $t\to\infty$, half of the rounds have signal $s_1$ and the other half have $s_2$, namely, $L_{s_1} \approx L_{s_2} \approx \frac{t}{2}$. 

\begin{figure}[h]
\centering
    \begin{subfigure}[b]{0.48\textwidth}
         \centering
         \includegraphics[width=0.73\textwidth]{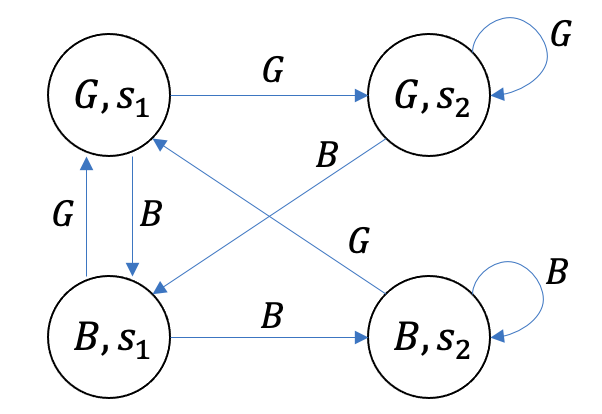}
         \caption{}
         \label{fig:markov-1}
    \end{subfigure}
\hfill  
    \begin{subfigure}[b]{0.48\textwidth}
         \centering
         \includegraphics[width=0.3\textwidth]{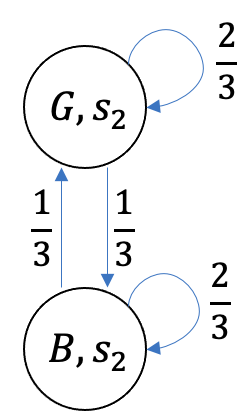}
         \caption{}
         \label{fig:markov-2}
    \end{subfigure}
\caption{Markov chains}
\end{figure}

Finally, we prove Claim (4) by analyzing the sequence $Seq_{s_2}$.  We only give the high-level idea.  We note that the sequence $Seq_{s_2}$ is generated by the Markov chain in Figure \ref{fig:markov-2}, which is the marginal Markov chain on $(G, s_2), (B, s_2)$ obtained from Figure \ref{fig:markov-1}.  We claim that the expected utility of the sender in each round of $Seq_{s_2}$ is roughly $\frac{1}{2}$.  Consider a round $t$ in $Seq_{s_2}$.  According to the Markov chain, we can see that the expected numbers of $G$ states and $B$ states in previous rounds are the same.  However, by a standard anti-concentration argument, with high probability the actual numbers of $G$ states and $B$ states in previous rounds must differ by at least $\Omega(\sqrt t)$: there are either more $G$s (in which case the receiver takes action $a$ at round $t$) or more $B$s (in which case the receiver takes $b$).  In order for the numbers of $G$s and $B$s to differ by $\Omega(\sqrt t)$ at round $t$, the two numbers must have already differed by at least $\frac{\Omega(t)}{2}$ at round $t - \frac{\Omega(\sqrt t)}{2}$. 
 In other words, whether the receiver will take action $a$ or $b$ at round $t$ has already been decided at round $t - \frac{\Omega(\sqrt t)}{2}$.  Say the receiver will take $a$.   Assuming that $t$ is large enough such that $\frac{\Omega(\sqrt t)}{2}$ is larger than the mixing time of the Markov chain, the Markov chain mixes during the time from round $t - \frac{\Omega(\sqrt t)}{2}$ to round $t$, so the distribution of state at round $t$ is roughly uniform: $\Pr[G] \approx \Pr[B] \approx \frac{1}{2}$.  Given that the receiver takes action $a$ at round $t$, the sender obtains expected utility $\Pr[G]\su(a, G) + \Pr[B]\su(a, B) \approx \frac{1}{2}$.
 Therefore, the sender's total expected utility in $Seq_{s_2}$ is $\frac{1}{2}L_{s_2}$. 
\end{proof}

We remark that in Example~\ref{ex:better-than-classic-model} the sender uses a time-varying signaling scheme.  If we impose the requirement that the sender uses a fixed signaling scheme for all rounds, then a similar result to Theorem~\ref{thm:BP-robustness} (or Lemma~\ref{lem:upper_bound}) can be proved: the sender cannot do much better than the classic model.  The proof is a combination of the proof of Lemma~\ref{lem:upper_bound} and the concentration bound in Lemma~\ref{lem:concentration}.  We omit the details here.

\section{Conclusion and Discussion}
\label{sec:discussion}
We have shown in this paper that, in the Bayesian persuasion problem, under a certain condition (Assumption~\ref{assump:unique_optimal_action_strong}), the sender's maximal achievable utility is not affected a lot by the receiver's approximately-best-responding behavior; the sender can achieve a similar utility as in the classic best-responding model.
The key intuition behind this result is that any direct-revelation scheme can be robustified under Assumption~\ref{assump:unique_optimal_action_strong}.
This result does not carry over to the learning agent setting, where the sender can sometimes achieve much higher utility than in the classic model. 
Assumption~\ref{assump:unique_optimal_action_strong} can be significantly relaxed as shown in our later paper \cite{lin2024persuading}. 

\bibliographystyle{ACM-Reference-Format}
\bibliography{bibs}


\begin{thebibliography}{41}


\ifx \showCODEN    \undefined \def \showCODEN     #1{\unskip}     \fi
\ifx \showDOI      \undefined \def \showDOI       #1{#1}\fi
\ifx \showISBNx    \undefined \def \showISBNx     #1{\unskip}     \fi
\ifx \showISBNxiii \undefined \def \showISBNxiii  #1{\unskip}     \fi
\ifx \showISSN     \undefined \def \showISSN      #1{\unskip}     \fi
\ifx \showLCCN     \undefined \def \showLCCN      #1{\unskip}     \fi
\ifx \shownote     \undefined \def \shownote      #1{#1}          \fi
\ifx \showarticletitle \undefined \def \showarticletitle #1{#1}   \fi
\ifx \showURL      \undefined \def \showURL       {\relax}        \fi
\providecommand\bibfield[2]{#2}
\providecommand\bibinfo[2]{#2}
\providecommand\natexlab[1]{#1}
\providecommand\showeprint[2][]{arXiv:#2}

\bibitem[Alonso and CÂmara(2016)]%
        {alonso_persuading_2016}
\bibfield{author}{\bibinfo{person}{Ricardo Alonso} {and}
  \bibinfo{person}{Odilon CÂmara}.} \bibinfo{year}{2016}\natexlab{}.
\newblock \showarticletitle{Persuading {Voters}}.
\newblock \bibinfo{journal}{\emph{American Economic Review}}
  \bibinfo{volume}{106}, \bibinfo{number}{11} (\bibinfo{date}{Nov.}
  \bibinfo{year}{2016}), \bibinfo{pages}{3590--3605}.
\newblock
\showISSN{0002-8282}
\urldef\tempurl%
\url{https://doi.org/10.1257/aer.20140737}
\showDOI{\tempurl}


\bibitem[Anunrojwong et~al\mbox{.}(2023)]%
        {anunrojwong_persuading_2023}
\bibfield{author}{\bibinfo{person}{Jerry Anunrojwong},
  \bibinfo{person}{Krishnamurthy Iyer}, {and} \bibinfo{person}{David
  Lingenbrink}.} \bibinfo{year}{2023}\natexlab{}.
\newblock \showarticletitle{Persuading {Risk}-{Conscious} {Agents}: {A}
  {Geometric} {Approach}}.
\newblock \bibinfo{journal}{\emph{Operations Research}} (\bibinfo{date}{March}
  \bibinfo{year}{2023}), \bibinfo{pages}{opre.2023.2438}.
\newblock
\showISSN{0030-364X, 1526-5463}
\urldef\tempurl%
\url{https://doi.org/10.1287/opre.2023.2438}
\showDOI{\tempurl}


\bibitem[Babichenko et~al\mbox{.}(2021)]%
        {babichenko_regret-minimizing_2021}
\bibfield{author}{\bibinfo{person}{Yakov Babichenko}, \bibinfo{person}{Inbal
  Talgam-Cohen}, \bibinfo{person}{Haifeng Xu}, {and}
  \bibinfo{person}{Konstantin Zabarnyi}.} \bibinfo{year}{2021}\natexlab{}.
\newblock \showarticletitle{Regret-{Minimizing} {Bayesian} {Persuasion}}. In
  \bibinfo{booktitle}{\emph{Proceedings of the 22nd {ACM} {Conference} on
  {Economics} and {Computation}}}. \bibinfo{publisher}{ACM},
  \bibinfo{address}{Budapest Hungary}, \bibinfo{pages}{128--128}.
\newblock
\showISBNx{978-1-4503-8554-1}
\urldef\tempurl%
\url{https://doi.org/10.1145/3465456.3467574}
\showDOI{\tempurl}


\bibitem[Benjamin(2019)]%
        {benjamin2019errors}
\bibfield{author}{\bibinfo{person}{Daniel~J Benjamin}.}
  \bibinfo{year}{2019}\natexlab{}.
\newblock \showarticletitle{Errors in probabilistic reasoning and judgment
  biases}.
\newblock \bibinfo{journal}{\emph{Handbook of Behavioral Economics:
  Applications and Foundations 1}}  \bibinfo{volume}{2} (\bibinfo{year}{2019}),
  \bibinfo{pages}{69--186}.
\newblock


\bibitem[Bergemann and Morris(2019)]%
        {bergemann_information_2019}
\bibfield{author}{\bibinfo{person}{Dirk Bergemann} {and}
  \bibinfo{person}{Stephen Morris}.} \bibinfo{year}{2019}\natexlab{}.
\newblock \showarticletitle{Information {Design}: {A} {Unified} {Perspective}}.
\newblock \bibinfo{journal}{\emph{Journal of Economic Literature}}
  \bibinfo{volume}{57}, \bibinfo{number}{1} (\bibinfo{date}{March}
  \bibinfo{year}{2019}), \bibinfo{pages}{44--95}.
\newblock
\showISSN{0022-0515}
\urldef\tempurl%
\url{https://doi.org/10.1257/jel.20181489}
\showDOI{\tempurl}


\bibitem[Braverman et~al\mbox{.}(2018)]%
        {braverman_selling_2018}
\bibfield{author}{\bibinfo{person}{Mark Braverman}, \bibinfo{person}{Jieming
  Mao}, \bibinfo{person}{Jon Schneider}, {and} \bibinfo{person}{Matt
  Weinberg}.} \bibinfo{year}{2018}\natexlab{}.
\newblock \showarticletitle{Selling to a {No}-{Regret} {Buyer}}. In
  \bibinfo{booktitle}{\emph{Proceedings of the 2018 {ACM} {Conference} on
  {Economics} and {Computation}}}. \bibinfo{publisher}{ACM},
  \bibinfo{address}{Ithaca NY USA}, \bibinfo{pages}{523--538}.
\newblock
\showISBNx{978-1-4503-5829-3}
\urldef\tempurl%
\url{https://doi.org/10.1145/3219166.3219233}
\showDOI{\tempurl}


\bibitem[Brocas and Carrillo(2007)]%
        {brocas_influence_2007}
\bibfield{author}{\bibinfo{person}{Isabelle Brocas} {and}
  \bibinfo{person}{Juan~D. Carrillo}.} \bibinfo{year}{2007}\natexlab{}.
\newblock \showarticletitle{Influence through {Ignorance}}.
\newblock \bibinfo{journal}{\emph{The RAND Journal of Economics}}
  \bibinfo{volume}{38}, \bibinfo{number}{4} (\bibinfo{year}{2007}),
  \bibinfo{pages}{931--947}.
\newblock
\showISSN{07416261}
\urldef\tempurl%
\url{http://www.jstor.org/stable/25046346}
\showURL{%
\tempurl}
\newblock
\shownote{Publisher: [RAND Corporation, Wiley]}.


\bibitem[Cai et~al\mbox{.}(2023)]%
        {cai2023selling}
\bibfield{author}{\bibinfo{person}{Linda Cai}, \bibinfo{person}{S~Matthew
  Weinberg}, \bibinfo{person}{Evan Wildenhain}, {and} \bibinfo{person}{Shirley
  Zhang}.} \bibinfo{year}{2023}\natexlab{}.
\newblock \showarticletitle{Selling to Multiple No-Regret Buyers}.
\newblock \bibinfo{journal}{\emph{arXiv preprint arXiv:2307.04175}}
  (\bibinfo{year}{2023}).
\newblock


\bibitem[Camara et~al\mbox{.}(2020)]%
        {camara_mechanisms_2020}
\bibfield{author}{\bibinfo{person}{Modibo~K. Camara}, \bibinfo{person}{Jason~D.
  Hartline}, {and} \bibinfo{person}{Aleck Johnsen}.}
  \bibinfo{year}{2020}\natexlab{}.
\newblock \showarticletitle{Mechanisms for a {No}-{Regret} {Agent}: {Beyond}
  the {Common} {Prior}}. In \bibinfo{booktitle}{\emph{2020 {IEEE} 61st {Annual}
  {Symposium} on {Foundations} of {Computer} {Science} ({FOCS})}}.
  \bibinfo{publisher}{IEEE}, \bibinfo{address}{Durham, NC, USA},
  \bibinfo{pages}{259--270}.
\newblock
\showISBNx{978-1-72819-621-3}
\urldef\tempurl%
\url{https://doi.org/10.1109/FOCS46700.2020.00033}
\showDOI{\tempurl}


\bibitem[Camerer(1998)]%
        {camerer1998bounded}
\bibfield{author}{\bibinfo{person}{Colin Camerer}.}
  \bibinfo{year}{1998}\natexlab{}.
\newblock \showarticletitle{Bounded rationality in individual decision making}.
\newblock \bibinfo{journal}{\emph{Experimental economics}} \bibinfo{volume}{1},
  \bibinfo{number}{2} (\bibinfo{year}{1998}), \bibinfo{pages}{163--183}.
\newblock


\bibitem[Camerer et~al\mbox{.}(2004)]%
        {10.1162/0033553041502225}
\bibfield{author}{\bibinfo{person}{Colin~F. Camerer}, \bibinfo{person}{Teck-Hua
  Ho}, {and} \bibinfo{person}{Juin-Kuan Chong}.}
  \bibinfo{year}{2004}\natexlab{}.
\newblock \showarticletitle{{A Cognitive Hierarchy Model of Games*}}.
\newblock \bibinfo{journal}{\emph{The Quarterly Journal of Economics}}
  \bibinfo{volume}{119}, \bibinfo{number}{3} (\bibinfo{date}{08}
  \bibinfo{year}{2004}), \bibinfo{pages}{861--898}.
\newblock


\bibitem[Castiglioni et~al\mbox{.}(2020)]%
        {castiglioni_online_2020}
\bibfield{author}{\bibinfo{person}{Matteo Castiglioni}, \bibinfo{person}{Andrea
  Celli}, \bibinfo{person}{Alberto Marchesi}, {and} \bibinfo{person}{Nicola
  Gatti}.} \bibinfo{year}{2020}\natexlab{}.
\newblock \showarticletitle{Online {Bayesian} {Persuasion}}. In
  \bibinfo{booktitle}{\emph{Advances in {Neural} {Information} {Processing}
  {Systems}}}, Vol.~\bibinfo{volume}{33}. \bibinfo{publisher}{Curran
  Associates, Inc.}, \bibinfo{pages}{16188--16198}.
\newblock
\urldef\tempurl%
\url{https://proceedings.neurips.cc/paper/2020/file/ba5451d3c91a0f982f103cdbe249bc78-Paper.pdf}
\showURL{%
\tempurl}


\bibitem[Castiglioni et~al\mbox{.}(2021)]%
        {castiglioni2021multi}
\bibfield{author}{\bibinfo{person}{Matteo Castiglioni},
  \bibinfo{person}{Alberto Marchesi}, \bibinfo{person}{Andrea Celli}, {and}
  \bibinfo{person}{Nicola Gatti}.} \bibinfo{year}{2021}\natexlab{}.
\newblock \showarticletitle{Multi-receiver online bayesian persuasion}. In
  \bibinfo{booktitle}{\emph{International Conference on Machine Learning}}.
  PMLR, \bibinfo{pages}{1314--1323}.
\newblock


\bibitem[Cesa-Bianchi and Lugosi(2006)]%
        {cesa-bianchi_prediction_2006}
\bibfield{author}{\bibinfo{person}{Nicolo Cesa-Bianchi} {and}
  \bibinfo{person}{Gabor Lugosi}.} \bibinfo{year}{2006}\natexlab{}.
\newblock \bibinfo{booktitle}{\emph{Prediction, {Learning}, and {Games}}}.
\newblock \bibinfo{publisher}{Cambridge University Press},
  \bibinfo{address}{Cambridge}.
\newblock
\showISBNx{978-0-511-54692-1}
\urldef\tempurl%
\url{https://doi.org/10.1017/CBO9780511546921}
\showDOI{\tempurl}


\bibitem[de~Clippel and Zhang(2022)]%
        {de_clippel_non-bayesian_2022}
\bibfield{author}{\bibinfo{person}{Geoffroy de Clippel} {and}
  \bibinfo{person}{Xu Zhang}.} \bibinfo{year}{2022}\natexlab{}.
\newblock \showarticletitle{Non-{Bayesian} {Persuasion}}.
\newblock \bibinfo{journal}{\emph{Journal of Political Economy}}
  \bibinfo{volume}{130}, \bibinfo{number}{10} (\bibinfo{date}{Oct.}
  \bibinfo{year}{2022}), \bibinfo{pages}{2594--2642}.
\newblock
\showISSN{0022-3808, 1537-534X}
\urldef\tempurl%
\url{https://doi.org/10.1086/720464}
\showDOI{\tempurl}


\bibitem[Deng et~al\mbox{.}(2019)]%
        {deng_strategizing_2019}
\bibfield{author}{\bibinfo{person}{Yuan Deng}, \bibinfo{person}{Jon Schneider},
  {and} \bibinfo{person}{Balasubramanian Sivan}.}
  \bibinfo{year}{2019}\natexlab{}.
\newblock \showarticletitle{Strategizing against {No}-regret {Learners}}. In
  \bibinfo{booktitle}{\emph{Advances in {Neural} {Information} {Processing}
  {Systems}}}, \bibfield{editor}{\bibinfo{person}{H.~Wallach},
  \bibinfo{person}{H.~Larochelle}, \bibinfo{person}{A.~Beygelzimer},
  \bibinfo{person}{F.~d' Alché-Buc}, \bibinfo{person}{E.~Fox}, {and}
  \bibinfo{person}{R.~Garnett}} (Eds.), Vol.~\bibinfo{volume}{32}.
  \bibinfo{publisher}{Curran Associates, Inc.}
\newblock
\urldef\tempurl%
\url{https://proceedings.neurips.cc/paper/2019/file/8b6dd7db9af49e67306feb59a8bdc52c-Paper.pdf}
\showURL{%
\tempurl}


\bibitem[Dughmi(2017)]%
        {dughmi_algorithmic_2017}
\bibfield{author}{\bibinfo{person}{Shaddin Dughmi}.}
  \bibinfo{year}{2017}\natexlab{}.
\newblock \showarticletitle{Algorithmic information structure design: a
  survey}.
\newblock \bibinfo{journal}{\emph{ACM SIGecom Exchanges}} \bibinfo{volume}{15},
  \bibinfo{number}{2} (\bibinfo{date}{Feb.} \bibinfo{year}{2017}),
  \bibinfo{pages}{2--24}.
\newblock
\showISSN{1551-9031}
\urldef\tempurl%
\url{https://doi.org/10.1145/3055589.3055591}
\showDOI{\tempurl}


\bibitem[Dughmi and Xu(2016)]%
        {dughmi_algorithmic_2016}
\bibfield{author}{\bibinfo{person}{Shaddin Dughmi} {and}
  \bibinfo{person}{Haifeng Xu}.} \bibinfo{year}{2016}\natexlab{}.
\newblock \showarticletitle{Algorithmic {Bayesian} persuasion}. In
  \bibinfo{booktitle}{\emph{Proceedings of the forty-eighth annual {ACM}
  symposium on {Theory} of {Computing}}}. \bibinfo{publisher}{ACM},
  \bibinfo{address}{Cambridge MA USA}, \bibinfo{pages}{412--425}.
\newblock
\showISBNx{978-1-4503-4132-5}
\urldef\tempurl%
\url{https://doi.org/10.1145/2897518.2897583}
\showDOI{\tempurl}


\bibitem[Dworczak and Pavan(2020)]%
        {dworczak2020preparing}
\bibfield{author}{\bibinfo{person}{Piotr Dworczak} {and}
  \bibinfo{person}{Alessandro Pavan}.} \bibinfo{year}{2020}\natexlab{}.
\newblock \showarticletitle{Preparing for the worst but hoping for the best:
  Robust (bayesian) persuasion}.
\newblock  (\bibinfo{year}{2020}).
\newblock


\bibitem[Emek et~al\mbox{.}(2014)]%
        {emek_signaling_2014}
\bibfield{author}{\bibinfo{person}{Yuval Emek}, \bibinfo{person}{Michal
  Feldman}, \bibinfo{person}{Iftah Gamzu}, \bibinfo{person}{Renato PaesLeme},
  {and} \bibinfo{person}{Moshe Tennenholtz}.} \bibinfo{year}{2014}\natexlab{}.
\newblock \showarticletitle{Signaling {Schemes} for {Revenue} {Maximization}}.
\newblock \bibinfo{journal}{\emph{ACM Transactions on Economics and
  Computation}} \bibinfo{volume}{2}, \bibinfo{number}{2} (\bibinfo{date}{June}
  \bibinfo{year}{2014}), \bibinfo{pages}{1--19}.
\newblock
\showISSN{2167-8375, 2167-8383}
\urldef\tempurl%
\url{https://doi.org/10.1145/2594564}
\showDOI{\tempurl}


\bibitem[Feng et~al\mbox{.}(2024)]%
        {feng2024rationality}
\bibfield{author}{\bibinfo{person}{Yiding Feng}, \bibinfo{person}{Chien-Ju Ho},
  {and} \bibinfo{person}{Wei Tang}.} \bibinfo{year}{2024}\natexlab{}.
\newblock \showarticletitle{Rationality-robust information design: Bayesian
  persuasion under quantal response}. In \bibinfo{booktitle}{\emph{Proceedings
  of the 2024 Annual ACM-SIAM Symposium on Discrete Algorithms (SODA)}}. SIAM,
  \bibinfo{pages}{501--546}.
\newblock


\bibitem[Feng et~al\mbox{.}(2022)]%
        {feng_online_2022}
\bibfield{author}{\bibinfo{person}{Yiding Feng}, \bibinfo{person}{Wei Tang},
  {and} \bibinfo{person}{Haifeng Xu}.} \bibinfo{year}{2022}\natexlab{}.
\newblock \showarticletitle{Online {Bayesian} {Recommendation} with {No}
  {Regret}}. In \bibinfo{booktitle}{\emph{Proceedings of the 23rd {ACM}
  {Conference} on {Economics} and {Computation}}}. \bibinfo{publisher}{ACM},
  \bibinfo{address}{Boulder CO USA}, \bibinfo{pages}{818--819}.
\newblock
\showISBNx{978-1-4503-9150-4}
\urldef\tempurl%
\url{https://doi.org/10.1145/3490486.3538327}
\showDOI{\tempurl}


\bibitem[Gentzkow and Kamenica(2017)]%
        {gentzkow_bayesian_2017}
\bibfield{author}{\bibinfo{person}{Matthew Gentzkow} {and}
  \bibinfo{person}{Emir Kamenica}.} \bibinfo{year}{2017}\natexlab{}.
\newblock \showarticletitle{Bayesian persuasion with multiple senders and rich
  signal spaces}.
\newblock \bibinfo{journal}{\emph{Games and Economic Behavior}}
  \bibinfo{volume}{104} (\bibinfo{date}{July} \bibinfo{year}{2017}),
  \bibinfo{pages}{411--429}.
\newblock
\showISSN{08998256}
\urldef\tempurl%
\url{https://doi.org/10.1016/j.geb.2017.05.004}
\showDOI{\tempurl}


\bibitem[Guruganesh et~al\mbox{.}(2024)]%
        {guruganesh2024contracting}
\bibfield{author}{\bibinfo{person}{Guru Guruganesh}, \bibinfo{person}{Yoav
  Kolumbus}, \bibinfo{person}{Jon Schneider}, \bibinfo{person}{Inbal
  Talgam-Cohen}, \bibinfo{person}{Emmanouil-Vasileios Vlatakis-Gkaragkounis},
  \bibinfo{person}{Joshua~R. Wang}, {and} \bibinfo{person}{S.~Matthew
  Weinberg}.} \bibinfo{year}{2024}\natexlab{}.
\newblock \bibinfo{title}{Contracting with a Learning Agent}.
\newblock
\newblock
\showeprint[arxiv]{2401.16198}~[cs.GT]


\bibitem[Kamenica(2019)]%
        {kamenica2019bayesian}
\bibfield{author}{\bibinfo{person}{Emir Kamenica}.}
  \bibinfo{year}{2019}\natexlab{}.
\newblock \showarticletitle{Bayesian persuasion and information design}.
\newblock \bibinfo{journal}{\emph{Annual Review of Economics}}
  \bibinfo{volume}{11} (\bibinfo{year}{2019}), \bibinfo{pages}{249--272}.
\newblock


\bibitem[Kamenica and Gentzkow(2011)]%
        {kamenica_bayesian_2011}
\bibfield{author}{\bibinfo{person}{Emir Kamenica} {and}
  \bibinfo{person}{Matthew Gentzkow}.} \bibinfo{year}{2011}\natexlab{}.
\newblock \showarticletitle{Bayesian {Persuasion}}.
\newblock \bibinfo{journal}{\emph{American Economic Review}}
  \bibinfo{volume}{101}, \bibinfo{number}{6} (\bibinfo{date}{Oct.}
  \bibinfo{year}{2011}), \bibinfo{pages}{2590--2615}.
\newblock
\showISSN{0002-8282}
\urldef\tempurl%
\url{https://doi.org/10.1257/aer.101.6.2590}
\showDOI{\tempurl}


\bibitem[Kosterina(2022)]%
        {kosterina_persuasion_2022}
\bibfield{author}{\bibinfo{person}{Svetlana Kosterina}.}
  \bibinfo{year}{2022}\natexlab{}.
\newblock \showarticletitle{Persuasion with unknown beliefs}.
\newblock \bibinfo{journal}{\emph{Theoretical Economics}} \bibinfo{volume}{17},
  \bibinfo{number}{3} (\bibinfo{year}{2022}), \bibinfo{pages}{1075--1107}.
\newblock
\showISSN{1933-6837}
\urldef\tempurl%
\url{https://doi.org/10.3982/TE4742}
\showDOI{\tempurl}


\bibitem[Lin and Chen(2024)]%
        {lin2024persuading}
\bibfield{author}{\bibinfo{person}{Tao Lin} {and} \bibinfo{person}{Yiling
  Chen}.} \bibinfo{year}{2024}\natexlab{}.
\newblock \bibinfo{title}{Persuading a Learning Agent}.
\newblock
\newblock
\showeprint[arxiv]{2402.09721}~[cs.GT]


\bibitem[Mansour et~al\mbox{.}(2022)]%
        {mansour_strategizing_2022}
\bibfield{author}{\bibinfo{person}{Yishay Mansour}, \bibinfo{person}{Mehryar
  Mohri}, \bibinfo{person}{Jon Schneider}, {and}
  \bibinfo{person}{Balasubramanian Sivan}.} \bibinfo{year}{2022}\natexlab{}.
\newblock \showarticletitle{Strategizing against {Learners} in {Bayesian}
  {Games}}. In \bibinfo{booktitle}{\emph{Proceedings of {Thirty} {Fifth}
  {Conference} on {Learning} {Theory}}} \emph{(\bibinfo{series}{Proceedings of
  {Machine} {Learning} {Research}}, Vol.~\bibinfo{volume}{178})}.
  \bibinfo{publisher}{PMLR}, \bibinfo{pages}{5221--5252}.
\newblock
\urldef\tempurl%
\url{https://proceedings.mlr.press/v178/mansour22a.html}
\showURL{%
\tempurl}


\bibitem[McKelvey and Palfrey(1995)]%
        {mckelvey_quantal_1995}
\bibfield{author}{\bibinfo{person}{Richard~D. McKelvey} {and}
  \bibinfo{person}{Thomas~R. Palfrey}.} \bibinfo{year}{1995}\natexlab{}.
\newblock \showarticletitle{Quantal {Response} {Equilibria} for {Normal} {Form}
  {Games}}.
\newblock \bibinfo{journal}{\emph{Games and Economic Behavior}}
  \bibinfo{volume}{10}, \bibinfo{number}{1} (\bibinfo{date}{July}
  \bibinfo{year}{1995}), \bibinfo{pages}{6--38}.
\newblock
\showISSN{08998256}
\urldef\tempurl%
\url{https://doi.org/10.1006/game.1995.1023}
\showDOI{\tempurl}


\bibitem[Rabin(2013)]%
        {10.1257/aer.103.3.617}
\bibfield{author}{\bibinfo{person}{Matthew Rabin}.}
  \bibinfo{year}{2013}\natexlab{}.
\newblock \showarticletitle{An Approach to Incorporating Psychology into
  Economics}.
\newblock \bibinfo{journal}{\emph{American Economic Review}}
  \bibinfo{volume}{103}, \bibinfo{number}{3} (\bibinfo{date}{May}
  \bibinfo{year}{2013}), \bibinfo{pages}{617--22}.
\newblock
\urldef\tempurl%
\url{https://doi.org/10.1257/aer.103.3.617}
\showDOI{\tempurl}


\bibitem[Ravindran and Cui(2022)]%
        {ravindran_competing_2022}
\bibfield{author}{\bibinfo{person}{Dilip Ravindran} {and}
  \bibinfo{person}{Zhihan Cui}.} \bibinfo{year}{2022}\natexlab{}.
\newblock \bibinfo{title}{Competing {Persuaders} in {Zero}-{Sum} {Games}}.
\newblock
\newblock
\urldef\tempurl%
\url{http://arxiv.org/abs/2008.08517}
\showURL{%
\tempurl}
\newblock
\shownote{arXiv:2008.08517 [econ]}.


\bibitem[Rubinstein and Zhao(2024)]%
        {rubinstein2024strategizing}
\bibfield{author}{\bibinfo{person}{Aviad Rubinstein} {and}
  \bibinfo{person}{Junyao Zhao}.} \bibinfo{year}{2024}\natexlab{}.
\newblock \bibinfo{title}{Strategizing against No-Regret Learners in
  First-Price Auctions}.
\newblock
\newblock
\showeprint[arxiv]{2402.08637}~[cs.GT]


\bibitem[Stahl and Wilson(1995)]%
        {Stahl1995OnPM}
\bibfield{author}{\bibinfo{person}{Dale~O. Stahl} {and}
  \bibinfo{person}{Paul~W. Wilson}.} \bibinfo{year}{1995}\natexlab{}.
\newblock \showarticletitle{On Players' Models of Other Players: Theory and
  Experimental Evidence}.
\newblock \bibinfo{journal}{\emph{Games and Economic Behavior}}
  \bibinfo{volume}{10} (\bibinfo{year}{1995}), \bibinfo{pages}{218--254}.
\newblock


\bibitem[Taneva(2019)]%
        {taneva_information_2019}
\bibfield{author}{\bibinfo{person}{Ina Taneva}.}
  \bibinfo{year}{2019}\natexlab{}.
\newblock \showarticletitle{Information {Design}}.
\newblock \bibinfo{journal}{\emph{American Economic Journal: Microeconomics}}
  \bibinfo{volume}{11}, \bibinfo{number}{4} (\bibinfo{date}{Nov.}
  \bibinfo{year}{2019}), \bibinfo{pages}{151--185}.
\newblock
\showISSN{1945-7669, 1945-7685}
\urldef\tempurl%
\url{https://doi.org/10.1257/mic.20170351}
\showDOI{\tempurl}


\bibitem[{Tyler Lu} et~al\mbox{.}(2010)]%
        {tyler_lu_contextual_2010}
\bibfield{author}{\bibinfo{person}{{Tyler Lu}}, \bibinfo{person}{{David Pal}},
  {and} \bibinfo{person}{{Martin Pal}}.} \bibinfo{year}{2010}\natexlab{}.
\newblock \showarticletitle{Contextual {Multi}-{Armed} {Bandits}}. In
  \bibinfo{booktitle}{\emph{Proceedings of the {Thirteenth} {International}
  {Conference} on {Artificial} {Intelligence} and {Statistics}}},
  Vol.~\bibinfo{volume}{9}. \bibinfo{publisher}{PMLR},
  \bibinfo{pages}{485--492}.
\newblock
\urldef\tempurl%
\url{https://proceedings.mlr.press/v9/lu10a.html}
\showURL{%
\tempurl}


\bibitem[Wang(2013)]%
        {wang_bayesian_2013}
\bibfield{author}{\bibinfo{person}{Yun Wang}.} \bibinfo{year}{2013}\natexlab{}.
\newblock \showarticletitle{Bayesian {Persuasion} with {Multiple} {Receivers}}.
\newblock \bibinfo{journal}{\emph{SSRN Electronic Journal}}
  (\bibinfo{year}{2013}).
\newblock
\showISSN{1556-5068}
\urldef\tempurl%
\url{https://doi.org/10.2139/ssrn.2625399}
\showDOI{\tempurl}


\bibitem[Wu et~al\mbox{.}(2022)]%
        {wu_sequential_2022}
\bibfield{author}{\bibinfo{person}{Jibang Wu}, \bibinfo{person}{Zixuan Zhang},
  \bibinfo{person}{Zhe Feng}, \bibinfo{person}{Zhaoran Wang},
  \bibinfo{person}{Zhuoran Yang}, \bibinfo{person}{Michael~I. Jordan}, {and}
  \bibinfo{person}{Haifeng Xu}.} \bibinfo{year}{2022}\natexlab{}.
\newblock \showarticletitle{Sequential {Information} {Design}: {Markov}
  {Persuasion} {Process} and {Its} {Efficient} {Reinforcement} {Learning}}. In
  \bibinfo{booktitle}{\emph{Proceedings of the 23rd {ACM} {Conference} on
  {Economics} and {Computation}}}. \bibinfo{publisher}{ACM},
  \bibinfo{address}{Boulder CO USA}, \bibinfo{pages}{471--472}.
\newblock
\showISBNx{978-1-4503-9150-4}
\urldef\tempurl%
\url{https://doi.org/10.1145/3490486.3538313}
\showDOI{\tempurl}


\bibitem[Yang and Zhang(2024)]%
        {yang2024computational}
\bibfield{author}{\bibinfo{person}{Kunhe Yang} {and} \bibinfo{person}{Hanrui
  Zhang}.} \bibinfo{year}{2024}\natexlab{}.
\newblock \bibinfo{title}{Computational Aspects of Bayesian Persuasion under
  Approximate Best Response}.
\newblock
\newblock
\showeprint[arxiv]{2402.07426}~[cs.GT]


\bibitem[Ziegler(2020)]%
        {ziegler2020adversarial}
\bibfield{author}{\bibinfo{person}{Gabriel Ziegler}.}
  \bibinfo{year}{2020}\natexlab{}.
\newblock \bibinfo{booktitle}{\emph{Adversarial bilateral information design}}.
\newblock \bibinfo{type}{{T}echnical {R}eport}.
\newblock


\bibitem[Zu et~al\mbox{.}(2021)]%
        {zu_learning_2021}
\bibfield{author}{\bibinfo{person}{You Zu}, \bibinfo{person}{Krishnamurthy
  Iyer}, {and} \bibinfo{person}{Haifeng Xu}.} \bibinfo{year}{2021}\natexlab{}.
\newblock \showarticletitle{Learning to {Persuade} on the {Fly}: {Robustness}
  {Against} {Ignorance}}. In \bibinfo{booktitle}{\emph{Proceedings of the 22nd
  {ACM} {Conference} on {Economics} and {Computation}}}.
  \bibinfo{publisher}{ACM}, \bibinfo{address}{Budapest Hungary},
  \bibinfo{pages}{927--928}.
\newblock
\showISBNx{978-1-4503-8554-1}
\urldef\tempurl%
\url{https://doi.org/10.1145/3465456.3467593}
\showDOI{\tempurl}


\end{thebibliography}

\appendix
\section{Missing Proofs from Section~\ref{sec:model}}

\subsection{Proof of Example \ref{example:approximately-best-responding}}
\label{app:example:approximately-best-responding}
Consider the quantal response model.  Let $\gamma = \frac{\log(|A| \lambda)}{\lambda}$.  Given signal $s$, if an action $a\in A$ is \emph{not} a $\gamma$-best-responding action, then by definition
\[ \ru(a^*_\pi(s), \mu_s) - \ru(a, \mu_s)  \ge \gamma\]
where $a^*_\pi(s)$ is a best-responding action.  So, the probability of the receiver choosing $a$ is at most: 
\begin{align*}
    \frac{\exp(\lambda \ru(a, \mu_s))}{\sum_{a\in A} \exp(\lambda \ru(a, \mu_s))} \le \frac{\exp(\lambda \ru(a, \mu_s))}{\exp(\lambda \ru(a^*_\pi(s), \mu_s))} = \exp\big( - \lambda \big( \ru(a^*_\pi(s), \mu_s ) - \ru(a, \mu_s) \big) \big) \le \exp( - \lambda \gamma ) = \frac{1}{|A|\lambda}. 
\end{align*}
By a union bound, the probability of the receiver choosing any not $\gamma$-approxiamtely optimal action is at most $\frac{1}{\lambda}$.  So, this strategy is $(\frac{\log(|A| \lambda)}{\lambda}, \frac{1}{\lambda})$-best-responding.

\subsection{Proof of Lemma \ref{lem:restrict_to_gamma-best}}
\label{app:lem:restrict_to_gamma-best}
For any $(\gamma, \delta)$-best-responding receiver strategy $\rho$, we define receiver strategy $\tilde \rho$ that always takes $\gamma$-best-responding actions and takes those actions with probabilities proportional to the probabilities in $\rho$: formally, 
\begin{align*}
	\rho(a \smid s) =
	\begin{cases}
		\frac{\rho(a \smid s)}{\rho(A_{\pi}^\gamma(s) \smid s)}  & \text{if } a \in A_{\pi}^\gamma(s), \\
		0 & \text{otherwise,}  
	\end{cases} 
\end{align*}
where $\rho(A_{\pi}^\gamma(s) \smid s) = \sum_{a\in A_{\pi}^\gamma(s)} \rho(a \smid s) = \Pr_{a\sim \rho(s)}[a\in A_\pi^\gamma(s) ]\ge 1 - \delta > 0$ by the definition of $(\gamma, \delta)$-best-responding strategy.

We first prove $\esu_\prior(\pi, \rho) \le \esu_\prior(\pi, \tilde \rho) +  \delta$: 
\begin{align*}
	\esu_\prior(\pi, \rho) & = \sum_{(\omega, s)} \pi(\omega, s) \E_{a\sim \rho(s)}\big[ \su(a, \omega) \big] \\
	& = \sum_{(\omega, s)} \pi(\omega, s) \Big( \E_{a\sim \rho(s)}\big[ \su(a, \omega) \mid a\in A_\pi^\gamma(s) \big] \cdot \Pr_{a\sim \rho(s)}\big[a\in A_\pi^\gamma(s)\big] \\
	& \hspace{6em} + \E_{a\sim \rho(s)}\big[ \su(a, \omega) \mid a\notin A_\pi^\gamma(s) \big] \cdot \Pr_{a\sim \rho(s)}\big[a\notin A_\pi^\gamma(s)\big]  \Big) \\
	& \le \sum_{(\omega, s)} \pi(\omega, s) \Big( \E_{a\sim \rho(s)}\big[ \su(a, \omega) \mid a\in A_\pi^\gamma(s) \big] \cdot 1  \; + \; 1 \cdot \delta  \Big) \\
	& = \sum_{(\omega, s)} \pi(\omega, s) \Big( \E_{a\sim \tilde \rho(s)}\big[ \su(a, \omega) \big]  \; + \; \delta \Big) \\
	& =  \esu_\prior(\pi, \tilde \rho) +  \delta. 
\end{align*}

We then prove $\esu_\prior(\pi, \rho) \ge \esu_\prior(\pi, \tilde \rho) - \delta$: 
\begin{align*}
	\esu_\prior(\pi, \rho) & = \sum_{(\omega, s)} \pi(\omega, s) \E_{a\sim \rho(s)}\big[ \su(a, \omega) \big] \\
	& \ge \sum_{(\omega, s)} \pi(\omega, s) \Big( \E_{a\sim \rho(s)}\big[ \su(a, \omega) \mid a\in A_\pi^\gamma(s) \big] \cdot \Pr_{a\sim \rho(s)}\big[a\in A_\pi^\gamma(s)\big]  \; + \; 0 \Big)  \\
	& \ge \sum_{(\omega, s)} \pi(\omega, s) \E_{a\sim \rho(s)}\big[ \su(a, \omega) \mid a\in A_\pi^\gamma(s) \big] \cdot (1-\delta) \\
	& = (1-\delta) \sum_{(\omega, s)} \pi(\omega, s) \E_{a\sim \tilde \rho(s)}\big[ \su(a, \omega) \big] \\
	& = (1-\delta) \esu_\prior(\pi, \tilde \rho) \\
	& \ge \esu_\prior(\pi, \tilde \rho) -  \delta. 
\end{align*}

Thus, $\big|\esu_\prior(\pi, \rho) - \esu_\prior(\pi, \tilde \rho)\big| \le \delta$ is proved.

\section{Missing Proofs from Section \ref{sec:main-result}}
\subsection{Proof of Claim \ref{claim:implies}}
\label{app:proof_claims:implies}
By Lemma~\ref{lem:restrict_to_gamma-best}, for any signaling scheme $\pi$, any $(\gamma, \delta)$-best-responding receiver strategy $\rho$,  there exists a $\gamma$-best-responding receiver strategy $\tilde \rho$ such that $|\esu_\prior(\pi, \rho) - \esu_\prior(\pi, \tilde \rho)| \le \delta$.
So, for $\underline{\OBJ}(\prior, \gamma, \delta)$ we have  
\begin{align*}
\underline{\OBJ}(\prior, \gamma, \delta) = \sup_{\pi} \inf_{\rho: (\gamma, \delta)\text{-best-responding}} \esu_\prior(\pi, \rho) & \ge  \sup_{\pi} \inf_{\tilde \rho: \gamma\text{-best-responding}} \esu_\prior(\pi, \tilde \rho) - \delta \\
& = \underline{\OBJ}(\prior, \gamma, 0) - \delta \\
\text{by \eqref{eq:main-result-gamma-best-responding} } & \ge \OPT^\BP(\prior) - \tfrac{\gamma}{\prior_{\min} (\Delta-\gamma)} - \delta 
\end{align*}
and for $\overline{\OBJ}(\prior, \gamma, \delta)$ we have  
\begin{align*}
\overline{\OBJ}(\prior, \gamma, \delta) = \sup_{\pi} \sup_{\rho: (\gamma, \delta)\text{-best-responding}} \esu_\prior(\pi, \rho) & \le  \sup_{\pi} \sup_{\tilde \rho: \gamma\text{-best-responding}} \esu_\prior(\pi, \tilde \rho) + \delta \\
& = \overline{\OBJ}(\prior, \gamma, 0) + \delta \\
\text{by \eqref{eq:main-result-gamma-best-responding} } & \le \OPT^\BP(\prior) + \tfrac{\gamma}{\prior_{\min} \Delta} + \delta. 
\end{align*}

\subsection{Proof of Claim \ref{claim:deterministic-direct}}
\label{app:claim:deterministic-direct}
Under the definition \eqref{eq:pi-direct-definition} of $\pi^{\mathrm{direct}}$, the posterior probability of a state $\omega \in \Omega$ given signal $a \in A$ is: 
\begin{align*}
	\pi^{\mathrm{direct}}(\omega \smid a) & = \frac{\prior(\omega) \pi^{\mathrm{direct}}(a \smid \omega)}{\pi^{\mathrm{direct}}(a)} = \frac{\prior(\omega) \pi(S_a \smid \omega)}{\pi(S_a)} \\
	& = \frac{\prior(\omega) \sum_{s\in S_a} \pi(s \smid \omega)}{\sum_{s\in S_a} \pi(s)} = \frac{\sum_{s\in S_a} \pi(s) \cdot \frac{\prior(\omega) \pi(s \smid \omega)}{\pi(s)}}{\sum_{s\in S_a} \pi(s)} = \frac{\sum_{s\in S_a} \pi(s) \cdot \pi(\omega \smid s)}{\sum_{s\in S_a} \pi(s)} 
\end{align*}

To prove claim (1), we note that, when the receiver is recommended action $a\in A$, the difference between its expected utilities of taking action $a$ and any other action $a' \ne a$ is: 
\begin{align*}
	\sum_{\omega \in \Omega} \pi^{\mathrm{direct}}(\omega \smid a)  \Big( \ru(a, \omega) - \ru(a', \omega) \Big) & = 
	\sum_{\omega \in \Omega} \frac{\sum_{s\in S_a} \pi(s) \cdot \pi(\omega \smid s)}{\sum_{s\in S_a} \pi(s)}  \Big( \ru(a, \omega) - \ru(a', \omega) \Big) \\
	& = \frac{1}{\sum_{s\in S_a} \pi(s)} \sum_{s\in S_a} \pi(s) \sum_{\omega \in \Omega} \pi(\omega \smid s) \Big( \ru(a, \omega) - \ru(a', \omega) \Big).
\end{align*}
Because $\rho^{\det}$ is a $\gamma$-best-responding strategy and $\rho^{\det}(s) = a$ for $s\in S_a$, $a$ must be a $\gamma$-best-responding action to $s$ under signaling scheme $\pi$, namely, $\sum_{\omega \in \Omega} \pi(\omega \smid s) \big( \ru(a, \omega) - \ru(a', \omega) \big) \ge -\gamma$.  So, 
\begin{align*}
	\sum_{\omega \in \Omega} \pi^{\mathrm{direct}}(\omega \smid a)  \Big( \ru(a, \omega) - \ru(a', \omega) \Big)
	~ \ge~  \frac{1}{\sum_{s\in S_a} \pi(s)} \sum_{s\in S_a} \pi(s) \cdot (-\gamma) ~ =~  -\gamma. 
\end{align*}
This implies that the obedient strategy is $\gamma$-best-responding. 

To prove claim (2), we note that the sender's expected utility under $\pi^{\mathrm{direct}}$ when the receiver takes the obedient strategy is: 
\begin{align*}
	\esu_\prior(\pi^{\mathrm{direct}}, \rho^{\ob}) & = \sum_{a\in A} \pi^{\mathrm{direct}}(a) \sum_{\omega\in\Omega} \pi^{\mathrm{direct}}(\omega \smid a) \cdot \su(a, \omega) \\
	& =  \sum_{a\in A} \pi(S_a) \sum_{\omega\in\Omega} \frac{\sum_{s\in S_a} \pi(s) \cdot \pi(\omega \smid s)}{\sum_{s\in S_a} \pi(s)}  \cdot \su(a, \omega) \\
	& =  \sum_{a\in A} \sum_{\omega\in\Omega} \sum_{s\in S_a} \pi(s) \cdot \pi(\omega \smid s) \cdot \su(a, \omega) \\
	& =  \sum_{a\in A} \sum_{s\in S_a} \pi(s) \sum_{\omega\in\Omega} \pi(\omega \smid s) \cdot \su(a, \omega) \\
	& =  \sum_{s\in S} \pi(s) \sum_{\omega\in\Omega} \pi(\omega \smid s) \cdot \su(a = \rho^{\mathrm{det}}(s), \omega) \\
	& = \esu_\prior(\pi, \rho^{\mathrm{det}}). 
\end{align*}

\subsection{Missing Proofs from Section~\ref{sec:learning}}
\subsection{Proof of Example \ref{ex:infinite-empirical-best-responding}}
\label{app:ex:infinite-empirical-best-responding}
Let $T_s^{(t-1)}$ be the number of rounds in the first $t-1$ rounds where the signal is $s$, namely, $ T_s^{(t-1)} = \sum_{\tau \le t-1} \mathbbm{1}[s^{(\tau)} = s]$.  By definition, we have
\begin{align*}
    \ru(a, \hat \mu_s^{(t-1)}) = \frac{1}{T_s^{(t-1)}} \sum_{\tau \le t-1, s^{(\tau)} = s} \ru(a, \omega^{(\tau)}). 
\end{align*}
So, the probability with which the algorithm chooses action $a$ at round $t$ is equal to
\begin{align*}
    \frac{\exp\big(\eta_t \sum_{\tau \le t-1, s^{(\tau)} = s} \ru(a, \omega^{(\tau)}) \big)}{\sum_{a'\in A} \exp\big(\eta_t \sum_{\tau \le t-1, s^{(\tau)} = s} \ru(a, \omega^{(\tau)}) \big)} = \frac{\exp\big(\eta_t T_s^{(t-1)} \ru(a, \hat \mu_s^{(t-1)}) \big)}{\sum_{a'\in A} \exp\big(\eta_t T_s^{(t-1)} \ru(a', \hat \mu_s^{(t-1)}) \big)}. 
\end{align*}
We note that this probability is equal to the probability in the quantal response model (Example \ref{example:approximately-best-responding}) with the parameter $\lambda = \eta_t T_s^{(t-1)}$ and the posterior distribution $\mu_s$ replaced by $\hat \mu_s^{(t-1)}$.  So, according to the result for the quantal response model, the algorithm here is $(\gamma_t, \delta_t)$-best-responding to $\hat \mu_s^{(t-1)}$ with
\begin{align*}
    \gamma_t = \frac{\log(|A|\lambda)}{\lambda} = \frac{\log(|A|\eta_t T_s^{(t-1)})}{\eta_t T_s^{(t-1)}}, \quad \quad \delta_t = \frac{1}{\lambda} = \frac{1}{\eta_t T_s^{(t-1)}}. 
\end{align*}
By our choice of $\eta_t$, $\eta_t = O(\sqrt{\frac{\log|A|}{t}})$.   By the law of large number and by our assumption that every signal $s$ is sent with probability at least $p$ at each round, we have $T_s^{(t-1)} \ge (t-1) p \approx t p$ when $t$ is large.  Plugging in $\eta_t$ and $T_s^{(t-1)}$, we obtain
\begin{align*}
    \gamma_t \le O\Big( \frac{\log\big(|A| \sqrt{\frac{\log|A|}{t}} p t \big)}{ \sqrt{\frac{\log|A|}{t}} p t } \Big) \le O\Big( \frac{\log(|A|t)}{p \sqrt{t \log|A|}}  \Big), \quad \quad \delta_t \le O\Big( \frac{1}{\sqrt{\frac{\log|A|}{t}} p t} \Big) = O\Big(\frac{1}{p \sqrt{t\log|A|}} \Big). 
\end{align*}

\subsection{Proof of Lemma \ref{lem:concentration}}
\label{app:lem:concentration}
Let $\hat \pi^{(t)}(s)$ be the empirical marginal probability of signal $s$ in the first $t$ rounds: 
\[ \hat \pi^{(t)}(s) = \frac{1}{t} \sum_{\tau=1}^t \mathbbm{1}[s^{(\tau)} = s]. \]
Since the signaling scheme $\pi$ is fixed in all rounds, we have
\[ \E[\hat \pi^{(t)}(s)] = \frac{1}{t} \sum_{\tau=1}^t \E\big[ \mathbbm{1}[s^{(\tau)} = s] \big] = \frac{1}{t} \sum_{\tau=1}^t \pi(s) = \pi(s). \]
Consider the empirical quantity $\hat \pi^{(t)}(s) \ru(a, \hat \mu_s^{(t)})$: 
\begin{align} 
    \hat \pi^{(t)}(s) \ru(a, \hat \mu_s^{(t)}) & = \frac{1}{t} \sum_{\tau=1}^t \mathbbm{1}[s^{(\tau)} = s] \frac{\sum_{\tau=1}^t \mathbbm{1}[s^{(\tau)}=s] \ru(a, \omega^{(\tau)})}{\sum_{\tau=1}^t \mathbbm{1}[s^{(\tau)} = s]}  \nonumber \\
    & = \frac{1}{t} \sum_{\tau=1}^t \mathbbm{1}[s^{(\tau)}=s] \ru(a, \omega^{(\tau)}).  \label{eq:empirical-quantity}
\end{align}
We note that its expectation equals
\begin{align*}
    \E\big[ \hat \pi^{(t)}(s) \ru(a, \hat \mu_s^{(t)}) \big] & = \frac{1}{t} \sum_{\tau=1}^t \E\big[ \mathbbm{1}[s^{(\tau)}=s] \ru(a, \omega^{(\tau)}) \big] \\
    & = \frac{1}{t} \sum_{\tau=1}^t \pi(s) \E_{\omega | s}\big[\ru(a, \omega) \smid s\big]  = \pi(s) \ru(a, \mu_s). 
\end{align*}
According to \eqref{eq:empirical-quantity}, $\hat \pi^{(t)}(s) \ru(a, \hat \mu_s^{(t)})$ is the average of $t$ i.i.d.~random variables in $[0, 1]$.  So, by Hoeffding's inequality, for any $\eps \ge 0$, 
\begin{align*}
    \Pr\Big[ \big|\hat \pi^{(t)}(s) \ru(a, \hat \mu_s^{(t)}) - \pi(s) \ru(a, \mu_s) \big| > \eps \Big] \le 2e^{-2t\eps^2}. 
\end{align*}
Using a union bound over $s\in S$ and $a\in A$, and letting $\eps = \sqrt{\frac{\log(2|S||A|t)}{2t}}$, we obtain
\begin{align*}
    \Pr\Big[ \exists s\in S, a\in A, \text{ s.t. } \big|\hat \pi^{(t)}(s) \ru(a, \hat \mu_s^{(t)}) - \pi(s) \ru(a, \mu_s) \big| > \eps \Big] \le 2|S||A|e^{-2t\eps^2} = \frac{1}{t}. 
\end{align*}
In other words, with probability at least $1-\frac{1}{t}$, it holds that: for all $s\in S$, $a\in A$, 
\begin{equation}\label{eq:empirical-quantity-good}
    \big|\hat \pi^{(t)}(s) \ru(a, \hat \mu_s^{(t)}) - \pi(s) \ru(a, \mu_s) \big| \le \sqrt{\frac{\log(2|S||A|t)}{2t}}. 
\end{equation}
We also note that $\hat \pi^{(t)}(s)$ is the average of $t$ i.i.d.~random variables in $[0, 1]$.  So, by Chernoff bound,
\begin{align*}
    \Pr\Big[ \big|\hat \pi^{(t)}(s) - \pi(s) \big| > \delta \pi(s) \Big] \le 2e^{-\frac{\delta^2 \pi(s) t}{3}}. 
\end{align*}
Using a union bound over $s\in S$ and letting $\delta = \sqrt{\frac{3 \log(2|S|t)}{\pi(s)t}}$, we have 
\begin{align*}
    \Pr\Big[\exists s\in S, \text{ s.t. } \big|\hat \pi^{(t)}(s) - \pi(s) \big| > \delta \pi(s) \Big] \le 2|S|e^{-\frac{\delta^2 \pi(s) t}{3}} = \frac{1}{t}. 
\end{align*}
In other words, with probability at least $1-\frac{1}{t}$, it holds that: for all $s\in S$, 
\begin{equation}\label{eq:empirical-pi-good}
    \frac{\big|\hat \pi^{(t)}(s) - \pi(s) \big|}{\pi(s)} \le \delta = \sqrt{\frac{3 \log(2|S|t)}{\pi(s)t}} \quad \iff \quad 1-\delta \le \frac{\hat \pi^{(t)}(s)}{\pi(s)} \le 1+\delta. 
\end{equation}
From \eqref{eq:empirical-pi-good} we obtain
\begin{align} \label{eq:empirical-pi-lower-bound}
    \frac{\pi(s)}{\hat \pi^{(t)}(s)} \ge \frac{1}{1 + \delta} \ge 1 - \delta = 1 - \sqrt{\frac{3 \log(2|S|t)}{\pi(s)t}}, 
\end{align}
and with the assumption that $\delta = \sqrt{\frac{3 \log(2|S|t)}{\pi(s)t}} < \frac{1}{2}$, 
\begin{align} \label{eq:empirical-pi-upper-bound}
    \frac{\pi(s)}{\hat \pi^{(t)}(s)} \le \frac{1}{1 - \delta} = 1 + \frac{\delta}{1-\delta} \le 1 + 2\delta = 1 + 2 \sqrt{\frac{3 \log(2|S|t)}{\pi(s)t}}. 
\end{align}
Divide \eqref{eq:empirical-quantity-good} by $\hat \pi_s^{(t)}$: 
\begin{align*}
    \big|\ru(a, \hat \mu_s^{(t)}) - \frac{\pi(s)}{\hat \pi^{(t)}(s)}\ru(a, \mu_s) \big| \le \frac{1}{\hat \pi^{(t)}(s)}\sqrt{\frac{\log(2|S||A|t)}{2t}}. 
\end{align*}
Using \eqref{eq:empirical-pi-lower-bound} and \eqref{eq:empirical-pi-upper-bound}, we get
\begin{align*}
    \ru(a, \hat \mu_s^{(t)}) & \ge \frac{\pi(s)}{\hat \pi^{(t)}(s)}\ru(a, \mu_s) - \frac{1}{\hat \pi^{(t)}(s)}\sqrt{\frac{\log(2|S||A|t)}{2t}} \\
    & \ge \Big(1 - \sqrt{\frac{3 \log(2|S|t)}{\pi(s)t}} \Big) \ru(a, \mu_s) - \frac{1}{\pi(s)} \Big(1 + 2 \sqrt{\frac{3 \log(2|S|t)}{\pi(s)t}} \Big) \sqrt{\frac{\log(2|S||A|t)}{2t}} \\
    & \ge \ru(a, \mu_s) - \sqrt{\frac{3 \log(2|S|t)}{\pi(s)t}} -  \frac{1}{\pi(s)} \cdot 2 \cdot \sqrt{\frac{\log(2|S||A|t)}{2t}} \\
    & = \ru(a, \mu_s) - \sqrt{\frac{3 \log(2|S|t)}{\pi(s)t}} -  \frac{2}{\pi(s)}\sqrt{\frac{\log(2|S||A|t)}{2t}}
\end{align*}
and
\begin{align*}
    \ru(a, \hat \mu_s^{(t)}) & \le \frac{\pi(s)}{\hat \pi^{(t)}(s)}\ru(a, \mu_s) + \frac{1}{\hat \pi^{(t)}(s)}\sqrt{\frac{\log(2|S||A|t)}{2t}} \\
    & \le \Big(1 + 2\sqrt{\frac{3 \log(2|S|t)}{\pi(s)t}} \Big) \ru(a, \mu_s) + \frac{1}{\pi(s)} \Big(1 + 2 \sqrt{\frac{3 \log(2|S|t)}{\pi(s)t}} \Big) \sqrt{\frac{\log(2|S||A|t)}{2t}} \\
    & \le \ru(a, \mu_s) + 2\sqrt{\frac{3 \log(2|S|t)}{\pi(s)t}} + \frac{1}{\pi(s)} \cdot 2 \cdot \sqrt{\frac{\log(2|S||A|t)}{2t}} \\
    & = \ru(a, \mu_s) + 2\sqrt{\frac{3 \log(2|S|t)}{\pi(s)t}} + \frac{2}{\pi(s)}\sqrt{\frac{\log(2|S||A|t)}{2t}}. 
\end{align*}
Thus, we obtain $|\ru(a, \hat \mu_s^{(t)}) - \ru(a, \mu_s)| \le 2\sqrt{\frac{3\log(2|S|t)}{\pi(s)t}} + \frac{2}{\pi(s)}\sqrt{\frac{\log(2|S||A|t)}{2t}}$.

\end{document}